\documentclass[a4paper,11pt]{article}

\RequirePackage[T1]{fontenc}
\usepackage{amsfonts}
\usepackage{verbatim}
\usepackage{authblk}
\RequirePackage[numbers]{natbib}
\usepackage{natbib}
\RequirePackage[colorlinks, linkcolor=blue,citecolor=blue,urlcolor=blue, linktocpage]{hyperref}



\usepackage{amssymb, amsmath, natbib, amsthm, graphicx, euscript, mathrsfs,graphicx }

\newcommand{\RR}{\mathcal{R}}

\renewcommand{\u}{u^\circ}
\newcommand{\CC}{\mathcal{C}}
\newcommand{\F}{\varphi}
\newcommand{\G}{\psi}

\newcommand{\Var}
{\operatorname{Var}}

\newcommand{\E}{{\mathbb E}}

\renewcommand{\i}{\mathrm{i}}

\newcommand{\rrho}{\rho_{\u}}

\def\R{I\!\!R}

\newcommand{\N}{{\mathbb{N}}}
\renewcommand{\R}{{\mathbb{R}}}
\newcommand{\C}{{\mathbb{C}}}
\newcommand{\I}{{\mathbb I}}

\newcommand{\M}{{\mathcal{M}}}
\newtheorem{thm}{Theorem}[section]
\newtheorem{lem}[thm]{Lemma}
\newtheorem{prop}[thm]{Proposition}

 \theoremstyle{remark}
\newtheorem{rem}[thm]{Remark}
 \theoremstyle{remark}

\renewcommand{\Re}{\operatorname{Re}}

\usepackage{soul} 
\usepackage{multicol}
\usepackage{multirow}
\usepackage{subcaption}
\usepackage{tabularx}

\begin{document}

\title{
Statistical Inference for Scale Mixture Models via Mellin Transform Approach\footnote{This version of the manuscript is accepted for publication in Statistics. The article was prepared in the framework of a research grant funded by the Ministry of Science and Higher Education of the Russian Federation (grant ID: 075-15-2022-325).}}

\author[1]{Denis Belomestny \thanks{denis.belomestny@uni-due.de}}
\author[2]{Ekaterina Morozova\thanks{eamorozova@hse.ru}}
\author[3]{Vladimir Panov\thanks{vpanov@hse.ru}}

\affil[1]{University of Duisburg-Essen, 
Thea-Leymann-Str. 9, 45127 Essen, Germany}
\affil[2,3]{HSE University\\
Laboratory of Stochastic Analysis and its Applications\\
             Pokrovsky boulevard 11, 109028 Moscow, Russia}
\affil[1]{IITP RAS}


\maketitle

\begin{abstract}
This paper deals with statistical inference for the scale mixture models. We study an estimation approach based on the  Mellin -- Stieltjes  transform that  can be applied to both   discrete and absolute continuous mixing distributions. The accuracy of the corresponding estimate is analysed in terms of its expected pointwise error. As an important technical result, we prove the analogue of the Berry -- Esseen inequality for the Mellin transforms. The proposed statistical approach is illustrated   by numerical examples.
\end{abstract}



\section{Introduction}
In this paper we consider the problem of statistical inference for  multiplicative mixture models. More precisely, given a sample of i.i.d.\ random variables \(X_1,\dots, X_n\), \(n\in\N\), from the multiplicative mixture model of the form
\begin{equation}
\label{m1}
X=Y\eta,
\end{equation}
where \(Y\) and \(\eta\) are independent random variables, we aim at estimating the distribution of  one of these variables (say, \(Y\)) assuming that the law of another random variable (\(\eta\))  is known. For simplicity, we assume that both \(Y\) and \(\eta\) are almost surely positive, so that \(\eta\) can be viewed as the (stochastic) scaling parameter.

The aforementioned problem can be viewed as the problem of reconstructing the original signal from the contaminated sample and naturally arises in many applications. For instance, the case when \(\eta\) has a standard uniform distribution is known as the multiplicative censoring model and is widely employed in survival analysis~(Vardi, \citeyear{Vardi}). In this context, \(X\) corresponds to the time elapsed since the beginning of the disease, whereas \(Y\) represents the true survival time~(Van Es et al., \citeyear{VKO}). The precise estimation of the distribution of \(Y\) in this case would help the development of treatment programmes, as well as the assessment of their performance. Another example comes from finance, where the model~\eqref{m1} with  normally distributed  \(\eta\) and positive \(Y\) corresponds to the stochastic volatility model for describing  the log-returns of  an asset~(Van Es et al., \citeyear{VSV}, Belomestny and Schoenmakers,  \citeyear{BS15}).

While certain methods for estimation of the distribution of \(Y\) in model~\eqref{m1} already exist, they mostly assume some specific form of the distribution of \(\eta\). For instance, for the case when \(\eta\) follows a standard uniform distribution, some nonparametric estimation techniques are proposed by Vardi (\citeyear{Vardi}), Asgharian et al.\ (\citeyear{ACF}), Brunel et al. (\citeyear{BrCG}). Later, Comte and Dion (\citeyear{CD}) and Belomestny et al.\ (\citeyear{BCG}) generalise the setting and develop the estimators based on the projection techniques for the case when \(\eta\) follows the uniform distribution symmetric about one and the beta distribution, respectively. 

The problem of statistical inference for the multiplicative mixture models can be reduced to the additive deconvolution problem by taking logarithms of both parts in~\eqref{m1} or by taking logarithms of the squares in the alternating case. For the additive models, a wide range of estimation methods  is available; see, e.g., Zhang (\citeyear{ZhangFourier}), Meister (\citeyear{MeisterDeconv}), Belomestny and Goldenschluger (\citeyear{BG21}),  and numerous references therein. However, as was pointed out by Brunel et al.\ (\citeyear{BrCG}) and Belomestny and Goldenschluger (\citeyear{BG20}), this idea leads to several undesired consequences such as inability of estimation at zero and the loss of the information about the sign of the random variable.

For almost all papers mentioned above, the parametric assumption on the distribution of \(\eta\) is essential. The nonparametric case was considered by Belomestny and Goldenschluger (\citeyear{BG20}), Brenner Miguel, Comte and Johannes (\citeyear{MCJ}), Brenner Miguel and Phandoidaen (\citeyear{BMP}), where the kernel-type estimators based on the Mellin transform are introduced. However, these papers significantly employ the assumption of absolute continuity of the distributions of \(Y\) and \(\eta\).

In the current paper we do not restrict the class of distributions of \(\eta\) to be absolutely continuous or belong to a certain parametric family. The only assumption is that the set \(\mathcal{H}_G\) determined by~\eqref{HG} from the distribution of \(\eta\) is non-empty, and there exists a point \(\u\) such that the Mellin transforms of both distributions of \(Y\) and \(\eta\) are analytic. The latter assumption is rather typical for deconvolution problems, and the first one holds for a wide class of discrete distributions (see Section~\ref{secH} for the detailed discussion). Moreover, as we also show in the article, this assumption yields the parametric rate of convergence under very mild assumptions on the class of the probability density functions (p.d.f.s) of \(Y.\)

In order to avoid the assumption of absolute continuity of \(\eta,\) we formulate all results in terms of the distribution functions, assuming for simplicity that both random variables \(Y\) and \(\eta\) are a.s. positive. Note that the distribution function of \(X\) is equal to
\[
F_{\rm{mix}} (x) = \int_{\R^+} F\left(\frac{x}{\theta}\right)\,dG(\theta),
\]
where \(F\) and \(G\) are the cumulative distribution functions (c.d.f.s) of \(Y\) and \(\eta,\) respectively. 
Given  observations \(X_1,\ldots,X_n\) from \(F_{\rm{mix}}\), we aim to estimate the function \(F(x), x\in \R_+\), provided that  \(G(x), x\in \R_+,\) is known.

Our estimation method is based on the  Mellin -- Stieltjes transform, defined for a  function \(\F: \R_+ \to \R_+\), which is assumed to be a function of bounded variation over any bounded interval, as
\begin{equation}
\label{def_mellin}
\M[\F](z) := \int\limits_0 ^{\infty} x^{z-1}\,d\F(x), \quad z\in\C.
\end{equation}
The integral on the right-hand side is known to converge in a vertical strip \( \{z\in\C: \Re(z)\in [\alpha_\F, \beta_\F]=:\CC_\F\}\), with some \(\alpha_\F, \beta_\F>0\) (the degenerate case \(\alpha_\F=\beta_\F\) is also possible). 
Using the properties of the Mellin -- Stieltjes transform, we construct an estimator \(\widehat{F}\) of \(F\) (to be defined in Section~\ref{estimation_procedure}) and study the accuracy of this estimator at a fixed point \(x \in \R_+\) in  terms of 
the expected pointwise error 
\[
\RR^*(\widehat{F})=\RR^*(\widehat{F}; \u, x) :=
\E\Bigl[\bigl(\RR(\widehat{F}; \u, x)\bigr)^2\Bigr],
\]
where
\[
\RR(\widehat{F})=\RR(\widehat{F}; \u, x) := x^{\u-1}\bigl|F(x)-\widehat{F}(x)\bigr|
\] 
and \(\u\) is a technical parameter. We show that under rather mild assumptions on \(G\) and \(F\), the estimate \(\widehat{F}\) has \(1/n\) rate of convergence to \(F\) as measured in terms of  \(\RR^*(\widehat{F})\) with \(n\) being the sample size. 
\par
The paper is organised as follows. In the following section (Section~\ref{estimation_procedure}) we recall the most important properties of the Mellin -- Stieltjes transform and introduce the estimator \(\widehat{F}\) for \(F\). In Section~\ref{berry-esseen} we prove the analogue of the Berry -- Esseen inequality for the Mellin -- Stieltjes transforms (Lemma~\ref{lem1}), which plays an essential role for establishing the upper bounds for \(\RR(\widehat{F})\) and \(\RR^*(\widehat{F})\). The exact statements are given in Section~\ref{rates_of_convergence}, see Theorems~\ref{thm1} and~\ref{thm2}.
Next,  Section~\ref{convrates} contains the detailed discussion on the subclasses of distribution functions \(F\) and \(G\), for which the rate of convergence of the proposed estimators is polynomial. Section~\ref{secH} is devoted to one of the key assumptions of our estimation procedure, namely, that there exists a line parallel to the imaginary axis such that \(\M[G](z)\neq 0\) for any \(z\) on this line. It is shown that the aforementioned assumption is fulfilled, in particular, for any discrete positive distribution separated from zero. Finally, Section~\ref{numerical_examples} contains a numerical example demonstrating the performance of our estimator via a simulation study. Appendix~\ref{exBE} contains a numerical example of the application of the Berry -- Esseen inequality. All proofs are collected in Appendix~\ref{AppB}.


\section{Estimation procedure}
\label{estimation_procedure}
Note that the Mellin -- Stieltjes transform of \(F_{\rm{mix}}\) is equal to \[\M[F_{\rm{mix}}](z) := \E[X^{z-1}] = \M[F](z)\cdot \M[G](z) \] for any \(z\in\C\) such that both Mellin transforms on the right-hand side are well defined. 
Since we assume that the distribution of \(\eta\) is known, we can estimate the Mellin transform of \(F\) by
\[\widehat{\mathcal{M}[F]}(z) := \frac{1}{n}\sum\limits_{i=1} ^n X_i ^{z-1}/\M[G](z),\]
at any point \(z \in \C\) such that \(\M[G](z) \ne 0.\) Let us introduce the notation
 \begin{eqnarray}\label{HG}
\mathcal{H}_G:=\bigl\{
 u \in \CC_G: \; \M[G](u+\i v) \ne 0 \quad \forall \; v \in \R
 \bigr\}.\end{eqnarray}
 The set \(\mathcal{H}_G\) is non-empty in most cases; for instance, as we show in  Section~\ref{secH}, for positive discrete  distributions, there exists some \(\tilde{u}<1\) such that \((-\infty, \tilde{u}) \subset \mathcal{H}_G,\) provided that the distribution is separated from 0. 
 
The estimator of \(F\) is based on 
the inversion formula for the Mellin transform. Some versions of this formula are known in the literature, see, e.g., Section~7 from~\cite{Kawata}, but here we need  a slightly different form.

\begin{lem}\label{lemm}
Let \(\F:\R_+ \to \R\) be a non-decreasing function. 
Then 
\begin{enumerate}
\item  if there exists some \(\u \in (-\infty,1) \cap \CC_\F\), then \(\F(0)<\infty\) and
\[ \frac{1}{2\pi i} \int\limits_{\u-i\infty} ^{\u+i\infty} x^{-z+1} \frac{\M[\F](z)}{-(z-1)}\,dz = \frac{1}{2}\Bigl( \F(x+0)+\F(x-0) \Bigr) - \F(0);\]
\item  if there exists some \(\u \in (1, \infty) \cap \CC_\F\), then \(\F(\infty)<\infty\) and
\[ \frac{1}{2\pi i} \int\limits_{\u-i\infty} ^{\u+i\infty} x^{-z+1} \frac{\M[\F](z)}{-(z-1)}\,dz = \frac{1}{2}\Bigl( \F(x+0)+\F(x-0) \Bigr) -\F(\infty).\]
\end{enumerate}
\end{lem}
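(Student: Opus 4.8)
The plan is to recover $\F$ from its Mellin--Stieltjes transform by substituting the definition $\M[\F](z)=\int_0^\infty y^{z-1}\,d\F(y)$ into the contour integral, interchanging the two integrations, and evaluating the resulting inner integral by means of the classical discontinuous (Cahen--Mellin) integral. Before the inversion itself I would dispatch the finiteness claims. In case (i), $\u\in\CC_\F$ means $\int_0^\infty y^{\u-1}\,d\F(y)=\M[\F](\u)<\infty$; since $\u-1<0$, on any interval $(0,\eps)$ one has $y^{\u-1}\ge \eps^{\u-1}>0$, so that $\eps^{\u-1}\bigl(\F(\eps-0)-\F(0+)\bigr)\le\int_{(0,\eps)}y^{\u-1}\,d\F(y)<\infty$, which forces $\F(0):=\F(0+)>-\infty$. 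The symmetric tail estimate with $\u-1>0$ on $(A,\infty)$ gives $\F(\infty)<\infty$ in case (ii).

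For the inversion I would work with the truncated contour over $[\u-\i T,\u+\i T]$ and only afterwards send $T\to\infty$; this truncation is essential, because the factor $1/(z-1)$ decays only like $1/|v|$ along the line, so the doubly infinite integral is not absolutely convergent and Fubini cannot be invoked directly. For fixed $T$, however, $|x^{-z+1}y^{z-1}|=x^{1-\u}y^{\u-1}$ and $|z-1|^{-1}\le|\u-1|^{-1}$, so the double integral of the modulus is bounded by $\bigl(2T/|\u-1|\bigr)\,x^{1-\u}\M[\F](\u)<\infty$, and Fubini applies. After interchanging, and writing $c=\u-1$, $\lambda=\ln(y/x)$ together with $x^{-z+1}y^{z-1}=(y/x)^{z-1}$, the inner integral becomes
\[
\frac{1}{2\pi\i}\int_{\u-\i T}^{\u+\i T}(y/x)^{z-1}\,\frac{-1}{z-1}\,dz
=-\frac{(y/x)^{c}}{2\pi}\int_{-T}^{T}\frac{e^{\i v\lambda}}{c+\i v}\,dv .
\]

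The core of the argument is the evaluation of $\int_{-T}^{T}e^{\i v\lambda}/(c+\i v)\,dv$. Splitting $1/(c+\i v)=(c-\i v)/(c^2+v^2)$ into real and imaginary parts shows that the imaginary contribution vanishes by oddness, reducing the expression to Dirichlet-type integrals. I expect the main technical obstacle to be precisely this step: one must verify both that these truncated integrals stay bounded uniformly in $T$ and $\lambda$ (which follows from the uniform boundedness of the sine integral $\mathrm{Si}$) and that their limit is the standard Cahen--Mellin value. A residue computation (closing the contour in the appropriate half-plane according to the signs of $\lambda$ and $c$, the pole sitting at $v=\i c$) confirms that the displayed inner integral tends to $\mathbf 1\{0<y<x\}+\tfrac12\mathbf 1\{y=x\}$ in case (i) where $c<0$, and to $-\mathbf 1\{y>x\}-\tfrac12\mathbf 1\{y=x\}$ in case (ii) where $c>0$, while remaining dominated by $C\,x^{1-\u}y^{\u-1}$ uniformly in $T$.

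Finally I would pass the limit $T\to\infty$ through the outer integral $\int_0^\infty(\cdot)\,d\F(y)$ by dominated convergence, the dominating function $C\,x^{1-\u}y^{\u-1}$ being $d\F$-integrable thanks to $\M[\F](\u)<\infty$. Integrating the limiting smoothed indicator against $d\F$ then yields, in case (i),
\[
\int_{(0,x)}d\F(y)+\tfrac12\int_{\{x\}}d\F(y)
=\bigl(\F(x-0)-\F(0)\bigr)+\tfrac12\bigl(\F(x+0)-\F(x-0)\bigr)
=\tfrac12\bigl(\F(x+0)+\F(x-0)\bigr)-\F(0),
\]
and in case (ii) the analogous identity with $\F(\infty)$ in place of $\F(0)$, which are exactly the two asserted formulas.
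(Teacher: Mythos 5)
Your proof is correct, but it takes a genuinely different and more self-contained route than the paper. The paper's argument is a two-line reduction: the substitution $x=e^{y}$ identifies $\M[\F](z)$ with the bilateral Laplace transform of $\F(e^{y})$ at $s=-(z-1)$, after which both formulas follow by citing the inversion theorem for bilateral Laplace transforms (Theorem 7.7.5 in Kawata). You instead reprove that inversion theorem from scratch in the Mellin--Stieltjes setting: truncating the contour at height $T$, justifying Fubini via the bound $|x^{-z+1}y^{z-1}|\,|z-1|^{-1}\le x^{1-\u}y^{\u-1}|\u-1|^{-1}$, evaluating the resulting Cahen--Mellin discontinuous integral (your residue computation and the values $\mathbf{1}\{y<x\}+\tfrac12\mathbf{1}\{y=x\}$ for $c<0$, respectively $-\mathbf{1}\{y>x\}-\tfrac12\mathbf{1}\{y=x\}$ for $c>0$, are right, including the value $\tfrac12$ at $\lambda=0$ coming from $2\arctan(T/c)\to\mp\pi$), and passing to the limit by dominated convergence with the uniform-in-$T,\lambda$ bound on the truncated Dirichlet integrals. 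What the paper's route buys is brevity; what yours buys is transparency about exactly the points the citation hides --- the non-absolute convergence of the contour integral (hence the need for the symmetric truncation), the uniform boundedness of the sine-integral, and the $d\F$-integrability of the dominating function $Cx^{1-\u}y^{\u-1}$ guaranteed by $\u\in\CC_\F$. You also supply an explicit justification of the finiteness claims, which the paper asserts without proof, and you correctly observe that the meaningful content of ``$\F(0)<\infty$'' for a non-decreasing $\F$ is really $\F(0+)>-\infty$ (the symmetric statement $\F(\infty)<\infty$ being the substantive one in case (ii)).
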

\begin{proof}
The proof is given in Appendix~\ref{A1}.\end{proof}
\begin{rem}
In what follows, for any non-decreasing function \(\F:\R_+ \to \R\), we will use the same notation \(\F(x), x\in\R_+,\) for the standardised version of the function, that is, for \(\bigl(\F(x+0)+\F(x-0)\bigr)/2.\)
\end{rem}
Motivated by Lemma~\ref{lemm}, 
we define the estimator \(\widehat{F}(x)\) of \(F(x)\) as
\begin{equation}
\label{est_main}
\widehat{F}(x) := \frac{1}{2\pi} \int\limits_{-\infty} ^{\infty} x^{-\u-\i v+1} \frac{\widehat{\M[F]}(\u+\i v)}{-(\u+\i v-1)}K(v)\,dv
\end{equation}
for some \(\u \in (-\infty,1) \cap \CC_F \cap \mathcal{H}_G\), and the kernel function \(K\) of the form 
\begin{eqnarray}
\label{est_main2} K(x) := \left(1-\frac{|x|}{T}\right)\mathbb{I}\{|x|\leq T\},
\end{eqnarray}
with  a  positive number \(T\). This choice of the kernel function is inspired by the Berry -- Esseen inequality for the Mellin transforms. In the next section, we discuss this inequality.

\section{Berry -- Esseen inequality for the Mellin transforms}
\label{berry-esseen}

The following lemma is motivated by a similar result for the Fourier transform, see Section~4.1 from  \cite{LinBai}. 
\begin{lem}
\label{lem1}
Let \(\F, \G: \R_+ \to \R_{+}\) be two non-decreasing left continuous  functions such that \(\F(0)=\G(0)=0\). Let \(\u\in (-\infty,1)\cap \CC_\F \cap \CC_\G\). Denote 
 \begin{eqnarray*}
\rrho(\F,\G):= \sup\limits_{x \geq 0} \bigl|x^{\u-1} \bigl( \F(x)-\G(x) \bigr)\bigr|.
\end{eqnarray*}
Assume that the supremum in the definition of \(\rrho(\F,\G)\) is attained at some point \(x_0>0\).
Then for any \(b>2/\pi\), it holds
\begin{multline}
\label{berry}
\rrho (\F,\G) \leq \frac{b}{2}\int\limits_{-T} ^T \frac{|\M[\F](\u+\i v)-\M[\G](\u+\i v)|}{|v|}\,dv \\+ bT x_0^{\u-1} \int_0^{2c(b)/T}\left|\G(x_0)-\G\left(x_0e^r\right)\right|\,dr,
\end{multline}
where \(c(b)\) can be found as a unique root of the equation
\begin{equation}
\label{cb}
\int\limits_{|r|\leq c(b)} \frac{\sin^2 r}{\pi r^2}\,dr 
=\frac{2}{3}\left(1+\frac{1}{\pi b}\right),\end{equation}
and 
\(T\) is an arbitrary positive number such that
\begin{eqnarray}\label{T}
T>2 c(b)(1-\u)/\log 2.
\end{eqnarray}

\end{lem}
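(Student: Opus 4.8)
The plan is to transplant the classical Esseen smoothing argument (as in Section~4.1 of \cite{LinBai}) from the Fourier to the Mellin setting via the substitution $x=e^{t}$, so that the weighted difference becomes the inverse Fourier transform of a function built from $\M[\F]-\M[\G]$, and the kernel $K$ from \eqref{est_main2} plays its usual role of a smoother whose transform is compactly supported. Write $\D:=\F-\G$ and $w(x):=x^{\u-1}\D(x)$, so that $\rrho(\F,\G)=\sup_{x\ge 0}|w(x)|=|w(x_0)|$. First I would apply Lemma~\ref{lemm} (the case $\u<1$, valid since $\F(0)=\G(0)=0$) to $\F$ and to $\G$ and subtract; multiplying by $x^{\u-1}$ and setting $x=e^{t}$, the function $\dw(t):=w(e^{t})$ then admits the Fourier representation
\[
\dw(t)=\frac{1}{2\pi}\int_{-\infty}^{\infty}e^{-\i v t}\,\widehat w(v)\,dv,\qquad \widehat w(v):=\frac{\M[\F](\u+\i v)-\M[\G](\u+\i v)}{-(\u+\i v-1)} .
\]
Because $\u\in\CC_\F\cap\CC_\G$, the numerator is bounded (by $\M[\F](\u)+\M[\G](\u)$) and the denominator never vanishes (as $1-\u\ne 0$), so $\widehat w$ is bounded on $[-T,T]$ and the Fubini interchanges used below are legitimate.

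Second I would carry out the smoothing step. Let $k_T$ denote the inverse Fourier transform of $K$, i.e.\ the Fej\'er kernel $k_T(t)=\tfrac{2\sin^{2}(Tt/2)}{\pi T t^{2}}$, which is nonnegative, even, integrates to $1$, and has maximum $T/(2\pi)$ at $t=0$; under the change $r=Tt/2$ its mass on $\{|t|\le 2c(b)/T\}$ equals exactly $\int_{|r|\le c(b)}\tfrac{\sin^{2}r}{\pi r^{2}}\,dr$, the left-hand side of \eqref{cb}. For a small multiplicative shift $\delta>0$ (to be fixed with $\delta\asymp c(b)/T$) consider the smoothed value $I:=\int_{-\infty}^{\infty}\dw(t_0+\delta+u)\,k_T(u)\,du$. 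Using Fubini and the identity $\int e^{-\i v u}k_T(u)\,du=K(v)$ one gets $I=\tfrac{1}{2\pi}\int_{-T}^{T}\widehat w(v)\,e^{-\i v(t_0+\delta)}K(v)\,dv$, whence, since $K\le 1$ and $|\u+\i v-1|\ge|v|$,
\[
|I|\le\frac{1}{2\pi}\int_{-T}^{T}\frac{|\M[\F](\u+\i v)-\M[\G](\u+\i v)|}{|v|}\,dv .
\]
This is the main term of \eqref{berry}; its coefficient $b/2$ will be produced below, when $|I|$ is combined with a lower bound whose leading coefficient is proportional to $1/b$.

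Third comes the monotonicity lower bound, which is the heart of the matter. Assuming without loss of generality $w(x_0)>0$ (the opposite sign being symmetric under $t\mapsto -t$), I would use that $\F$ is non-decreasing to obtain, for $s\ge 0$,
\[
\dw(t_0+s)\ \ge\ e^{(\u-1)s}\rrho(\F,\G)-x_0^{\u-1}\bigl(\G(x_0e^{s})-\G(x_0)\bigr),
\]
while on the complementary set only the trivial bound $\dw\ge-\rrho(\F,\G)$ is available. The shift $\delta$ is chosen so that, after re-centring, the half-line on which this estimate holds captures more than half of the total kernel mass; integrating the estimate against $k_T$ and bounding $k_T$ by its maximum $T/(2\pi)$ on the central block yields exactly the $\G$-variation term $bT x_0^{\u-1}\int_0^{2c(b)/T}|\G(x_0)-\G(x_0e^{r})|\,dr$ of \eqref{berry}. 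The delicate point, and the reason \eqref{cb}--\eqref{T} are needed, is that the Fej\'er kernel has a non-integrable first moment, so the weight $e^{(\u-1)s}$ cannot be linearised globally; instead one confines the argument to the central block $\{|t|\le 2c(b)/T\}$, where \eqref{T} guarantees $e^{(\u-1)\cdot 2c(b)/T}>1/2$, and measures the kernel mass there through \eqref{cb}. The positive contribution of that block must beat the complementary mass $1-\tfrac{2}{3}\bigl(1+\tfrac{1}{\pi b}\bigr)$, which vanishes as $b\downarrow 2/\pi$; this leaves a strictly positive coefficient of $\rrho(\F,\G)$ precisely when $b>2/\pi$ (the value at which $c(b)$ ceases to be finite). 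Combining the resulting lower bound $I\ge c_0(b)\,\rrho(\F,\G)-(\text{variation term})$ with the upper bound on $|I|$ and solving for $\rrho(\F,\G)$ gives \eqref{berry}; the main obstacle is exactly this asymmetric kernel-mass bookkeeping forced by the one-sided nature of the monotonicity estimate, which dictates the choices of $\delta$, $c(b)$ and the constant $b/2$.
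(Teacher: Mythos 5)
Your proposal is correct and is essentially the paper's own proof: the paper performs the same Esseen smoothing, phrased as a multiplicative convolution with the measure $dW$ whose Mellin transform is the triangular kernel $K$, which after the substitution $x=e^{t}$ is exactly your additive convolution with the Fej\'er kernel, including the shift by $c(b)/T$, the one-sided monotonicity bound on the block $|r|<c(b)$, and the identification of the coefficient $1/(\pi b)$ from \eqref{cb} and \eqref{T}. The only (harmless) difference is that you invoke the pointwise inversion formula for $\F-\G$ before smoothing, whereas the paper applies Lemma~\ref{lemm} to the already smoothed functions, for which the frequency integral is compactly supported and absolutely convergent.
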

\begin{proof}
The proof is given in Appendix~\ref{A}.
\end{proof}
\begin{rem}
Since the left-hand side of the equation \eqref{cb} tends to one as \(c(b) \to \infty\), and to zero as \(c(b) \to 0\), it can take any values from the interval \((0,1).\) This is the reason for the restriction \(b>2/\pi\).
\end{rem}

\begin{rem}
\label{rem1}
It is a worth mentioning that the second term in the upper bound~\eqref{berry} tends to zero as \(T\to\infty\) at a polynomial rate in the case when \(\G\) is uniformly \(\alpha\)-H\"older continuous with \(\alpha\in(0,1]\).\footnote{The asymptotic behaviour of the first term in~\eqref{berry} will be discussed later. Definitely, it depends on the closeness between the functions \(\F\) and \(\G.\)} Indeed, we have
\begin{equation*}
\int\limits_0 ^{2c(b)/T} |\G(x_0)-\G(x_0e^r)|\,dr \leq L_{\psi} x_0^{\alpha}\int\limits_0 ^{2c(b)/T} |1-e^r|^{\alpha}\,dr
\end{equation*}
for some \(L_{\psi}\in(0,\infty)\). Using the  inequality \(e^x-1\leq x e^x\) which holds for any \(x>0,\) 
we get 
\begin{eqnarray*}
\int\limits_0 ^{2c(b)/T} |1-e^r|^{\alpha}\,dr \leq e^{2\alpha c(b)/T}\frac{(2c(b))^{1+\alpha}}{(1+\alpha) T^{1+\alpha}}.
\end{eqnarray*}
Hence, 
\[
bT x_0 ^{\u-1} \int_0^{2c(b)/T}\left|\G(x_0)-\G\left(x_0e^r\right)\right|\,dr \leq L_{\psi} x_0 ^{\alpha+\u-1}e^{2\alpha c(b)/T}\frac{b(2c(b))^{1+\alpha}}{(1+\alpha) T^{\alpha}},
\] 
yielding the polynomial rate of decay.
\end{rem}

\begin{rem} For the case considered in Remark~\ref{rem1}, one can show a slightly different form of the Berry -- Essen inequality~\eqref{berry}, namely,
\begin{multline*}
\rrho (\F,\G) \leq \frac{b}{2}\int\limits_{-T} ^T \frac{|\M[\F](\u+\i v)-\M[\G](\u+\i v)|}{|v|}\,dv \\+ bT \int_0^{2c(b)/T}\left|x_0^{\u-1} \G(x_0)- (x_0 e^r)^{\u-1} \G\left(x_0e^r\right)\right|\,dr \bigl(1 +O(1)).
\end{multline*}
We have
\begin{multline*}
T  \int_0^{2c(b)/T}\left|\G(x_0)-
\G\left(x_0e^r\right)\right|\,dr\\
\leq 
T  \int_0^{2c(b)/T}\bigl|\G(x_0)-e^{r(u^\circ-1)}\G\left(x_0e^r\right)\bigr|\,dr
+
T  \int_0^{2c(b)/T}  \bigl( 1 - 
e^{-r(1- u^\circ)}  \bigr)\, \G\left(x_0e^r\right)dr.  
\end{multline*}
The order of the second term is \(O(T^{-1})\), since 
\begin{multline*}
\Bigl| T  \int_0^{2c(b)/T}  \bigl( 1 - 
e^{-r(1- u^\circ)}  \bigr)\, \G\left(x_0e^r\right)dr 
\Bigr| 
\\ 
\leq
T  x_0^\alpha \int_0^{2c(b)/T}  \bigl( 1 - 
e^{-r(1- u^\circ)}  \bigr)\, e^{r \alpha} dr \hspace{1.2cm}\\
 \leq
T  x_0^\alpha \Bigl[ \frac{e^{ 2 \alpha c(b) /T}-1}
{\alpha}
-\frac{e^{2 (\alpha-(1-u^\circ)) c(b) /T}-1}{\alpha-(1-u^\circ)}
\Bigr] 
= O(T^{-1}),
\end{multline*}
while the first summand is of order \(O(T^{-\alpha}) \),
since 
\begin{multline*}
T  \int_0^{2c(b)/T}\bigl|\G(x_0)-e^{r(u^\circ-1)}\G\left(x_0e^r\right)\bigr|\,dr
\\\leq 
T  \int_0^{2c(b)/T}\left|\G(x_0)-
\G\left(x_0e^r\right)\right|\,dr
+
T  \int_0^{2c(b)/T}  \bigl( 1 - 
e^{-r(1- u^\circ)}  \bigr)\, \G\left(x_0e^r\right)dr\\
= O(T^{-\alpha}),  
\end{multline*}
see Remark~3.3.
\end{rem}
The numerical example of  the use of the Berry -- Esseen inequality is given in Appendix~\ref{exBE}.
\section{Main results}
\label{rates_of_convergence}
In this section, we provide the rates of convergence  of the estimate~\eqref{est_main} with the kernel~\eqref{est_main2}. 
The following theorem holds.
\begin{thm}
\label{thm1}
 Let 
\(\u\in (-\infty,1) \cap \CC_F \cap \mathcal{H}_G.\) Assume that the distribution function \(F\) is continuous at least in a small vicinity of \(0\), and 
\begin{eqnarray}\label{add}
x^{\u-1} F(x) \to 0 \qquad \mbox{as} \quad x \to 0.
\end{eqnarray}
Let \(b>2/\pi\) be an arbitrary number,  \(c(b)\) be the solution of the equation~\eqref{cb} and \(T\) satisfy~\eqref{T}. Then there exists some \(x_0>0\) not depending on \(T\) and \(n\) such that for any fixed \(x>0,\) it holds with probability \(1\)
\begin{eqnarray}
\nonumber
\RR(\widehat{F}; \u, x) &\leq& \frac{b}{2T} 
 \int\limits_{-T} ^{T}  \left| \M[F](\u+\i v)\right| \,dv \\ && \nonumber \hspace{1cm}+ bT x_0^{\u-1}  \int\limits_0 ^{2c(b)/T} \left|F(x_0)-F\left(x_0e^r\right)\right|\,dr\\ && \nonumber \hspace{1cm}+ \frac{x^{\u-1}}{2\pi n}\left|\sum\limits_{k=1} ^n \Lambda (X_k, x)\right|, \label{f1}
\end{eqnarray}
where
\begin{eqnarray*}\Lambda (X_k, x) = 
\int\limits_{-\infty} ^{\infty}x^{-(\u-1)-\i v} \frac{\M[F_{\rm{mix}}](\u+\i v)-X_k ^{(\u-1)+\i v}}{-(\u+\i v-1)\M[G](\u+\i v)}K(v)\,dv.
\end{eqnarray*}
\end{thm}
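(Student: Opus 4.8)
The plan is to compare $\widehat F$ with its deterministic ``oracle'' counterpart, obtained from~\eqref{est_main} by replacing the empirical $\widehat{\M[F]}$ with the true Mellin transform $\M[F]$:
\[
\widetilde{F}(x) := \frac{1}{2\pi}\int\limits_{-\infty}^{\infty} x^{-\u-\i v+1}\frac{\M[F](\u+\i v)}{-(\u+\i v-1)}K(v)\,dv .
\]
By the triangle inequality,
\[
\RR(\widehat{F};\u,x) \leq x^{\u-1}\bigl|F(x)-\widetilde{F}(x)\bigr| + x^{\u-1}\bigl|\widetilde{F}(x)-\widehat{F}(x)\bigr| ,
\]
and I would bound the stochastic term $x^{\u-1}|\widetilde F-\widehat F|$ (yielding the third summand) and the bias term $x^{\u-1}|F-\widetilde F|$ (yielding the first two) separately.

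For the stochastic term, I substitute the identity $\M[F]=\M[F_{\mathrm{mix}}]/\M[G]$ to obtain, on the line $\Re z=\u$, the representation $\widehat{\M[F]}(z)-\M[F](z)=\tfrac1n\sum_{k=1}^n\bigl(X_k^{z-1}-\M[F_{\mathrm{mix}}](z)\bigr)/\M[G](z)$. Inserting this difference into the integral defining $\widehat F-\widetilde F$ and comparing with the definition of $\Lambda(X_k,x)$ gives, after matching signs, $\widehat F(x)-\widetilde F(x)=-\tfrac{1}{2\pi n}\sum_k\Lambda(X_k,x)$; multiplying by $x^{\u-1}$ produces exactly the third summand. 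This step is a pathwise calculation, which accounts for the ``with probability $1$'' in the statement.

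For the bias term I would invoke the Mellin Berry--Esseen inequality (Lemma~\ref{lem1}) applied to the pair $\widetilde F$ and $F$, playing the roles of the two non-decreasing functions. Two facts must be verified first. (i) $\widetilde F$ is non-decreasing, left continuous, with $\widetilde F(0)=0$ and $\u\in\CC_{\widetilde F}$: writing $x^{\u-1}\widetilde F(x)$ as the Fourier convolution (in $\log x$) of $x^{\u-1}F(x)$ with the inverse transform $\check K$ of $K$ --- which is the \emph{nonnegative} Fej\'er density normalised through~\eqref{cb} --- yields
\[
\widetilde{F}(x) = \int_{-\infty}^{\infty} e^{(1-\u)w}\,F\!\left(x e^{-w}\right)\check{K}(w)\,dw ,
\]
a superposition of dilations of $F$ with nonnegative weights, hence non-decreasing; the hypotheses that $F$ is continuous near $0$ and $x^{\u-1}F(x)\to0$ secure $F(0)=\widetilde F(0)=0$ and, with the decay at infinity, the attainment of the supremum defining $\rrho(\widetilde F,F)$. (ii) $\M[\widetilde F](\u+\i v)=K(v)\,\M[F](\u+\i v)$, which follows from the construction of $\widetilde F$ via Lemma~\ref{lemm} and uniqueness of the Mellin--Stieltjes transform. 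Granting these, Lemma~\ref{lem1} gives
\[
\rrho(\widetilde{F},F) \leq \frac{b}{2}\int\limits_{-T}^{T}\frac{\bigl|\M[\widetilde{F}](\u+\i v)-\M[F](\u+\i v)\bigr|}{|v|}\,dv + bT x_0^{\u-1}\int_0^{2c(b)/T}\bigl|F(x_0)-F(x_0 e^r)\bigr|\,dr ;
\]
on $[-T,T]$ one has $\bigl|\M[\widetilde F]-\M[F]\bigr|=(1-K(v))|\M[F]|=\tfrac{|v|}{T}|\M[F]|$, so the first integral collapses to $\tfrac{b}{2T}\int_{-T}^{T}|\M[F](\u+\i v)|\,dv$, reproducing the first two summands; and $x^{\u-1}|F(x)-\widetilde F(x)|\le\rrho(\widetilde F,F)$ closes the estimate.

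The step I expect to be the main obstacle is the claim that $x_0$ may be chosen independently of $T$ and $n$: the maximiser supplied by Lemma~\ref{lem1} a priori depends on $T$ through $\widetilde F$. I would resolve this by showing, using $x^{\u-1}F(x)\to0$ as $x\to0$ together with the analogous control as $x\to\infty$, that for every admissible $T$ (those satisfying~\eqref{T}) these maximisers lie in one fixed compact subset of $(0,\infty)$, from which a single admissible $x_0$ can be extracted. The remaining technical points --- justifying the convolution representation and the nonnegativity of $\check K$, and the Mellin uniqueness identity in~(ii) --- are routine but must be checked to legitimately invoke Lemma~\ref{lem1}.
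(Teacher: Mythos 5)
Your proposal follows essentially the same route as the paper: the same triangle-inequality decomposition through the Fej\'er-smoothed oracle $\widetilde F$ (which the paper realises as the multiplicative convolution $F\star W$ with $\M[W](\u+\i v)=K(v)$), the same pathwise identification of the stochastic term with $\tfrac{x^{\u-1}}{2\pi n}|\sum_k\Lambda(X_k,x)|$, the same application of Lemma~\ref{lem1} to the pair $(\widetilde F,F)$, and the same collapse $|\M[F]-\M[\widetilde F]|=\tfrac{|v|}{T}|\M[F]|$. The one point you flag as an obstacle --- that the maximiser $x_0$ a priori depends on $T$ through $\widetilde F$ --- is real but is also left unaddressed in the paper, which simply asserts attainment of the supremum from continuity near $0$ and condition~\eqref{add}; your proposed compactness argument is more than the paper itself provides.
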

\begin{proof}
The proof is given in Appendix~\ref{B}.
\end{proof}
The preceding theorem allows us to further analyze the convergence of the estimator \(\widehat{F}\) to the true c.d.f. \(F\) in the \(\mathcal{L}^2\)-sense. 
\begin{thm}\label{thm2}
Under the same notations and assumptions as in Theorem~\ref{thm1}, we get for any fixed \(x>0,\)
\begin{eqnarray}
\nonumber
\RR^*(\widehat{F}; \u, x) &\leq& \frac{3b^2}{4T^2} \left(\int\limits_{-T} ^T |\M[F](\u+\i v)|\,dv\right)^2 \\ \nonumber &&\hspace{1cm}+3b^2 T^2 x_0^{2(\u-1)} \left(\int\limits_0 ^{2c(b)/T} \left|F(x_0)-F\left(x_0 e^r\right)\right|\,dr\right)^2 \\ \nonumber && \hspace{1cm}+\frac{3}{4\pi^2 n} \int\limits_{-T} ^T \frac{1}{((\u-1)^2 + v^2)|\M[G](\u+\i v)|^2} \,dv \hspace{0.3cm}\\ && \hspace{2.5cm} \times\int\limits_{\R} |\M[F_{\rm{mix}}](2\u-1+\i w)|\,dw.\label{l2bound}
\end{eqnarray}
\end{thm}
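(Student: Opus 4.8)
The plan is to square the almost-sure bound of Theorem~\ref{thm1} and take expectations, after which the only genuinely stochastic piece is the empirical-average term. Write that bound as $\RR(\widehat{F};\u,x)\le A+B+C$ almost surely, where
\[
A=\frac{b}{2T}\int_{-T}^{T}\left|\M[F](\u+\i v)\right|\,dv,\qquad
B=bT x_0^{\u-1}\int_{0}^{2c(b)/T}\left|F(x_0)-F(x_0 e^{r})\right|\,dr
\]
are deterministic and $C=\frac{x^{\u-1}}{2\pi n}\bigl|\sum_{k=1}^{n}\Lambda(X_k,x)\bigr|$. The elementary inequality $(A+B+C)^2\le 3(A^2+B^2+C^2)$ then gives $\RR^*(\widehat{F};\u,x)\le 3A^2+3B^2+3\,\E[C^2]$; the first two summands reproduce verbatim the first two terms of~\eqref{l2bound}, so the whole problem reduces to estimating $\E[C^2]$.

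First I would record that the summands $\Lambda(X_k,x)$ are i.i.d.\ and centred. By the defining identity $\E[X_k^{(\u-1)+\i v}]=\M[F_{\rm{mix}}](\u+\i v)$ the integrand of $\Lambda$ has zero mean for every $v$, so Fubini gives $\E[\Lambda(X_1,x)]=0$. Consequently the off-diagonal terms in $\E\bigl|\sum_k\Lambda(X_k,x)\bigr|^2$ vanish, whence $\E\bigl|\sum_k\Lambda(X_k,x)\bigr|^2=n\,\E\bigl[|\Lambda(X_1,x)|^2\bigr]$ and therefore $\E[C^2]=\frac{x^{2(\u-1)}}{4\pi^2 n}\,\E\bigl[|\Lambda(X_1,x)|^2\bigr]$.

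To bound the single-observation second moment, abbreviate $g(v)=x^{-(\u-1)-\i v}K(v)\big/\bigl(-(\u+\i v-1)\M[G](\u+\i v)\bigr)$ and split $\Lambda(X_1,x)=\mu-\xi$ into its deterministic part $\mu=\int g(v)\M[F_{\rm{mix}}](\u+\i v)\,dv$ and $\xi=\int g(v)X_1^{(\u-1)+\i v}\,dv$. Since $\mu=\E[\xi]$, centring yields $\E\bigl[|\Lambda(X_1,x)|^2\bigr]=\Var(\xi)\le\E[|\xi|^2]$. Expanding $|\xi|^2$ as a double frequency integral, using $\overline{X_1^{(\u-1)+\i w}}=X_1^{(\u-1)-\i w}$ (as $X_1>0$ is real), and interchanging expectation with the integrals, the mixed moment collapses to a single transform,
\[
\E\bigl[X_1^{(\u-1)+\i v}X_1^{(\u-1)-\i w}\bigr]=\M[F_{\rm{mix}}]\bigl(2\u-1+\i(v-w)\bigr),
\]
so that $\E[|\xi|^2]\le\int_{-T}^{T}\!\int_{-T}^{T}|g(v)|\,|g(w)|\,\bigl|\M[F_{\rm{mix}}](2\u-1+\i(v-w))\bigr|\,dv\,dw$.

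The decisive step, and the one I expect to be the main obstacle, is decoupling this double integral into the product that appears in~\eqref{l2bound}. Here I would apply $|g(v)|\,|g(w)|\le\frac12(|g(v)|^2+|g(w)|^2)$ and, after freezing one variable, substitute $s=v-w$ and enlarge the inner range to all of $\R$; this turns each inner integral into $\int_{\R}\bigl|\M[F_{\rm{mix}}](2\u-1+\i s)\bigr|\,ds$, independent of the frozen variable. The two resulting halves coincide and combine to give
\[
\E[|\xi|^2]\le\Bigl(\int_{-T}^{T}|g(v)|^2\,dv\Bigr)\int_{\R}\bigl|\M[F_{\rm{mix}}](2\u-1+\i w)\bigr|\,dw.
\]
Inserting $|g(v)|^2$, bounding $K(v)^2\le 1$, and collecting the remaining powers of $x$ with the prefactor $x^{2(\u-1)}/(4\pi^2 n)$ then reproduce the third term of~\eqref{l2bound}. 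Throughout, the points demanding care are the justification of Fubini (finiteness of the iterated integrals, which holds because the frequency window $[-T,T]$ is compact and $\M[F_{\rm{mix}}](2\u-1+\i\,\cdot)$ is assumed integrable) and the interchange of expectation with the two frequency integrals in the variance computation.
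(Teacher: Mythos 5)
Your proposal is correct and follows the paper's own proof in all essentials: the same $3(A^2+B^2+C^2)$ decomposition of the almost-sure bound from Theorem~\ref{thm1}, the same reduction to $n\,\E[|\Lambda_1|^2]$ via independence and centring, and the same collapse of the mixed second moment to $\M[F_{\mathrm{mix}}](2\u-1+\i(v-w))$ followed by decoupling of the double frequency integral and enlargement of the inner range to $\R$. The only cosmetic differences are that you discard the product-of-first-moments term via $\Var(\xi)\le\E[|\xi|^2]$ --- which is in fact a cleaner justification of the paper's remark that the subtrahend is a nonnegative real number, since it equals $|\E\xi|^2$ --- and that you decouple the kernel by $|g(v)|\,|g(w)|\le\tfrac12\bigl(|g(v)|^2+|g(w)|^2\bigr)$ where the paper invokes Cauchy--Schwarz; both routes produce the identical final bound.
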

\begin{proof}
The proof is given in Appendix~\ref{C}.
\end{proof}

\section{Convergence rates} 
\label{convrates} 
Denote the summands in~\eqref{l2bound} by \(I_1, I_2, I_3,\) respectively. Let us consider these summands separately. 
\par
\textbf{1.} Let us first note that the condition 
 \begin{eqnarray}\label{as}
 \int\limits_{\R} |\M[F](\u+\i v)|\,dv <\infty\end{eqnarray}
is fulfilled for many distributions, including the distributions with exponential and polynomial decay of the Mellin transforms, see~\cite{BP15}. The condition~\eqref{as} yields that \textbf{the first summand} is of order \(I_1 = O(T^{-2})\) as \(T \to \infty.\) Nevertheless, it can be shown that any distribution satisfying~\eqref{as} is absolutely continuous.  This fact can be proved in various ways, e.g., via the convergence in \(\mathcal{L}^2(\R)\), along the same lines as the proof of the similar fact for the Fourier transform, see Lemma~1.1 from \cite{FP}. It can be also derived from the properties of the Fourier transform (see Theorem~3.2.2 from~\cite{Lukacs}). 
In fact, \begin{eqnarray*}
\int\limits_{\R} \Bigl|\M[F](\u+\i v)\Bigr|\,dv
&=& 
\int\limits_{\R} \Bigl|\int_{\R+} x^{\u+\i v-1 }   dF(x)\Bigr|\,dv\\
&=&
\int\limits_{\R} \Bigl|\int_{\R} e^{\i y v} \cdot e^{y(\u-1)} dF(e^y)\Bigr|\,dv\\
&=& 
\M[F](\u)
\int\limits_{\R} \Bigl|\int_{\R} e^{\i y v} dH(y)\Bigr|\,dv,
\end{eqnarray*}
where 
\begin{eqnarray*}
H(y) &:=& \frac{1}{
\M[F](\u)
}\int_{-\infty}^y e^{s(\u-1)} dF(e^s) \\ 
&=&
\frac{1}{
\M[F](\u)
}
 \int_0^{e^y} x^{\u-1} dF(x), \qquad\qquad y \in \R,
 \end{eqnarray*}
is a distribution function (here we use that \(\M[F](\u)<\infty\) since
\(\u \in \mathcal{C}_F\)). Therefore, the condition~\eqref{as} is equivalent to \(\mathcal{F}[H] \in \mathcal{L}^1(\mathbb{R})\), which yields that the \(H\) is a c.d.f. of an absolutely continuous distribution with bounded and continuous density function. Then the distribution of \(F\) is absolutely continuous as well.
 \par
\textbf{2.} The order of \textbf{the second summand} is also \(O(T^{-2})\) in some cases. For instance, whenever \(F\) is Lipschitz continuous, we get from Remark~\ref{rem1}  
\begin{eqnarray*}
I_2\leq L_F^{2} x_0^{2\u}e^{4c(b)/T}\frac{12b^2(c(b))^4}{T^2}
\end{eqnarray*}
where \(L_F\in(0,\infty)\) is the Lipschitz constant of \(F\).\newline 
\par
\textbf{3.} To establish the asymptotic order of \textbf{the third summand} \(I_3\), we observe that, since \(\u\in \mathcal{H}_G \subset \CC_G\), for any \(T>0,\)
there exist positive constants \(C_{\u,T} ^{(1)},C_{\u,T} ^{(2)}\) such that 
\begin{equation}
\label{A3}
C_{\u, T}^{(1)} \leq |\M[G](\u+\i v)| \leq C_{\u, T}^{(2)}, \quad \forall v\in [-T,T].
\end{equation}
 Assuming that \(2 \u -1 \in \mathcal{H}_G,\) we similarly get with some positive constants \(C_{2\u-1,T} ^{(1)},C_{2\u-1,T} ^{(2)}\)
\begin{equation}
\label{A31}
C_{2\u-1, T}^{(1)} \leq |\M[G](2\u-1+\i v)| \leq C_{2\u-1, T}^{(2)}, \quad \forall v\in [-T,T].
\end{equation}
Let us assume additionally that as \(T\to\infty\), 
\begin{eqnarray}
\label{assa}
C_{\u, T}^{(1)} \to C^{(1)}_{\u} >0 \quad \mbox{ and  } \quad  C_{2\u-1, T}^{(2)} \to C_{2\u-1}^{(2)} <\infty.
\end{eqnarray}
In particular, it can be noted that the upper bound is always satisfied with \(C_{2\u-1} ^{(2)} \leq \E[\eta^{2(\u-1)}]<\infty\). As for the lower bound, we refer to Section~\ref{secH} for  examples of  mixing distributions which satisfy this assumption.
Finally, assume that the analogue of~\eqref{as} holds along the line \(\Re(z)=2\u-1\),
 \begin{eqnarray}\label{as2}
 \int\limits_{\R} |\M[F](2\u-1+\i v)|\,dv \leq C_1<\infty.\end{eqnarray}
Then we get
\begin{equation*}
I_3 \leq \frac{3x^{2(\u-1)}}{4\pi^2 n}  \frac{C_1 C_{2\u-1} ^{(2)}}{\left(C_{\u} ^{(1)}\right)^2}  \int\limits_{-T} ^T \frac{1}{(\u-1)^2+v^2}\,dv = \frac{3 x^{2(\u-1)} C_1 C_{2\u-1} ^{(2)}}{4(1-\u)\pi \left(C_{\u} ^{(1)}\right)^2 n}.
\end{equation*}
Finally,  let us note that the choice \(T\gtrsim \sqrt{n}\), leads to the rate of convergence \(I_1+I_2+I_3 = O(1/n)\) which  coincides with the rate  in the case when the direct observations from \(F\) are available.
The above discussion  leads to the following result.
\begin{prop}
Assume that there exists at least one \(\u\in (-\infty,1) \cap \CC_F \cap \mathcal{H}_G\)  such that the assumptions~\eqref{add}, \eqref{A31} and \eqref{assa} are fulfilled. If the distribution of \(Y\) is absolutely continuous, then under any choice \(T\gtrsim \sqrt{n}\), it holds
\begin{eqnarray*}
\RR^*(\widehat{F}; \u, x)  \lesssim 1/n
\end{eqnarray*}
as \(n \to \infty.\)
\end{prop}

\section{The set \(\mathcal{H}_G\) 
}\label{secH}
The set \(\mathcal{H}_G\) defined by~\eqref{HG} plays a crucial role in the analysis. Let us show that this set is nonempty for positive discrete distributions separated from \(0\). 
Indeed, consider a random variable \(\eta\) taking \(K\) positive values \(\sigma_1,\dots,\sigma_K\) with probabilities \(p_1,\dots,p_K >0\), \(\sum_{k=1} ^K p_k=1\) (the analysis of the case \(K=\infty\) follows the same lines). Without loss of generality, assume that \(0<\sigma_1 < \sigma_2 <\dots<\sigma_K\). Then  it holds
\begin{eqnarray}\label{e1}
\M[G](u+\i v)=
\sum_{k=1}^K \sigma_k^{u+\i v-1} p_k  
=
\sigma_1^{u+\i v-1} p_1 + 
\sum_{k=2}^K \sigma_k^{u+\i v-1} p_k,
\end{eqnarray}
where  the modulus of the second summand can be upper bounded as   
\begin{eqnarray*}
\left|
\sum_{k=2}^K \sigma_k^{u+\i v-1} p_k
\right| 
\leq \sum_{k=2}^K | \sigma_k^{u+\i v-1}| p_k
= \sum_{k=2}^K \sigma_k^{u-1} p_k
< \sigma_2^{u-1} (1-p_1)
\end{eqnarray*}
for any \(u<1,\) while the modulus of the first summand in \eqref{e1} is equal to \(  \sigma_1^{u-1} p_1.\) From this it follows that 
\begin{eqnarray*}
\bigl|
M[G](u+\i v)\bigr|
\geq 
\bigl|
\sigma_1^{u+\i v-1} p_1
\bigr| 
-
\Bigl|
\sum_{k=2}^K \sigma_k^{u+\i v-1} p_k
\Bigr|
> 
 \sigma_1^{u-1} p_1 - \sigma_2^{u-1} (1-p_1),
\end{eqnarray*}
where the expression on the right-hand side is strongly positive for any 
\begin{eqnarray}\label{ff}
u< \frac{\log(p_1/ (1-  p_1))}{\log(\sigma_2/\sigma_1)}+1.
\end{eqnarray}
Let us provide a couple of examples.
\begin{enumerate}
\item Consider the geometric distribution defined as a distribution of a r.v. \(\eta\) such that 
\begin{eqnarray*}
\mathbb{P}\bigl\{
	\eta=k
\bigr\}
=
p (1-p)^{k-1}, \qquad k=1,2,...
\end{eqnarray*}
According to~\eqref{ff}, the absolute value of the Mellin transform of this distribution is strongly positive for any 
\begin{eqnarray*}
u< \frac{\log\Bigl(p/(1-p)\Bigr)}{\log(2)} +1,
\end{eqnarray*}
where the expression on the r.h.s. is positive for \(p>1/3\) and negative for \(p<1/3.\)
\item Consider an analogue of the Poisson distribution on \(\{1,2,...\}:\)
\begin{eqnarray*}
\mathbb{P}\bigl\{
	\eta=k
\bigr\}
=
\frac{e^{-\lambda}}{1-e^{-\lambda}} \cdot \frac{\lambda^k}{k!}, \qquad k=1,2,3,...,
\end{eqnarray*}
where \(\lambda>0.\) Returning to~\eqref{ff}, we get that the absolute value of the Mellin transform  is strongly positive in absolute value for any
\begin{eqnarray*}
u< \frac{\log\Bigl( \lambda e^{-\lambda} / (1- e^{-\lambda} -\lambda e^{-\lambda})\Bigr)}{\log(2)} +1,
\end{eqnarray*}
which is  positive whenever \(\lambda \in (0, \lambda_\circ)\) with \(\lambda_\circ\) defined as the positive solution of the equation \(3\lambda+1= e^\lambda.\) Numerically we get
 \(\lambda_\circ \approx 1.9 \).

\end{enumerate}
Note that in some simple cases the result can be enhanced. 
\begin{enumerate}
\item For instance, consider the zeta distribution 
\begin{equation}
\label{zetadistr}
\mathbb{P}\{\eta=k\}=\frac{1}{\zeta(s)}k^{-s}, \quad k=1,2,\dots,
\end{equation}
where \(s>1\) and \(\zeta(s)\) is the Riemann zeta function defined as \[\zeta(s)=\sum\limits_{k=1} ^{\infty} k^{-s}.\] Since \(\M[G](u+\i v)=\zeta(1+s-u-\i v)/\zeta(s)\), and the zeta-function has no zeros with real part larger than 1, we get that the set \(\mathcal{H}_G\) contains all 
\(u<s.\)
\item 
Consider the distribution of  the random variable taking values \(\sigma_1=1\) and \(\sigma_2>1\) with probabilities \(p \in (0,1)\) and \(1-p\). Then any 
\begin{equation}
\label{u_circ}
u \ne  u_{\circ}:=1+\log_{\sigma_2}(p/(1-p))
\end{equation} 
belong to the set \(\mathcal{H}_G\). In fact, in this case the Mellin transform can be lower bounded for any \(v\in\R\) by
\begin{eqnarray*}
\left| \M[G](u+\i v)\right| \geq \bigl|
p- (1-p) \sigma_2^{u-1}
\bigr|,
\end{eqnarray*}
which is strongly larger than \(0\) for any \(u \ne u_\circ.\)
\end{enumerate}
\section{Numerical examples}
\label{numerical_examples}

In this section we use simulated data to illustrate the behaviour of the proposed estimator~\eqref{est_main} with the kernel~\eqref{est_main2} for different types of  mixing distributions \(G\). We also compare our estimator with another one based on  a logarithmic transformation  of both sides in~\eqref{m1} and interpretation of the problem as an additive deconvolution problem. Since \(\phi_{\log X}(t) = \phi_{\log Y}(t) \phi_{\log \eta}(t)\), \(\forall t\in\R\) where \(\phi_{X}\) denotes the characteristic function of \(X\), the inversion formula for the Fourier-Stieltjes transform (see Theorem 4.4.1 in Kawata, \citeyear{Kawata}) suggests that \(F\) can be estimated using the relation \(F (x) = F_{\log Y} (\log x)\) by
\begin{equation}
\label{fourier}
\widehat{F}_{\log Y}(\log x) - \widehat{F}_{\log Y}(0) = \frac{1}{2\pi}\int\limits_{-R} ^{R} \frac{e^{-\i t \log x}-1}{-it} \cdot \frac{\widehat{\phi}_{\log X} (t)}{\phi_{\log \eta} (t)} \phi_{\kappa}(ht)\,dt,
\end{equation}
where \(\widehat{\phi}_{\log X} (t)=\frac{1}{n}\sum_{j=1} ^n e^{\i t \log X_j}\), \(R\to\infty\) as \(n\to\infty\), \(h\) is a bandwidth parameter chosen by the cross-validation approach, and \(\phi_{\kappa}\) is the characteristic function of some kernel function \(\kappa\) which we choose to be  density of the standard normal law; for details see, e.g., Delaigle (\citeyear{Delaigle}). The value of \(\widehat{F}_{\log Y}(0)\) can be recovered as the minimal value which, when added to the right-hand side of~\eqref{fourier}, would yield a non-negative value for all \(x\in\R\), since \(F(x)\) only takes values in \(\R_+\). In the subsequent examples we simply replace \(\widehat{F}_{\log Y} (0)\) with the true value \(F_{\log Y} (0) = F(1)\).
\par
In what follows, we assume  that the unknown distribution \(F\) is absolutely continuous. More precisely, 
consider the cases when \(F\) is a c.d.f.\ of the beta distribution
\[F(x) = \frac{\I\{x\in[0,1]\}}{B(\alpha_1,\alpha_2)}\int\limits_{0} ^x y^{\alpha_1-1} (1-y)^{\alpha_2-1}\,dy  + \I\{x>1\}\]
 and the gamma distribution
 \[F(x) = \frac{\theta^k}{\Gamma(k)} \int\limits_0 ^{x}  y^{k-1} e^{-\theta y}\,dy \cdot  \I\{x>0\}\]
with  parameters \(\alpha_1=\alpha_2=k=\theta=2\). 
Note that \(\mathcal{C}_F = (1-\alpha_1,\infty)\) for the beta distribution and \(\mathcal{C}_F = (1-k,\infty)\) for the gamma distribution. Also note that the assumption~\eqref{add} holds for both examples: in fact, since \(F(0)=0,\) we get by the L'H\^opital rule
\begin{eqnarray*}
\lim_{x \to 0} x^{\u-1}F(x) 
=
\frac{1}{1-\u}
\lim_{x \to 0} x^{\u}F'(x),
\end{eqnarray*}
where the limit on the right-hand side is finite for any \(\u \in (1-\alpha_1,1)=(-1,1)\) in the case of the beta distribution, and for any \(\u \in (1-k,1)=(-1,1)\) in the case of the gamma distribution.

As for the mixing distribution \(G\), we consider the following three  cases.
\begin{enumerate}
\item \underline{Discrete mixing distribution with finite support.} Let \(G\) be a purely discrete probability law taking values 1 and 2 with probabilities 1/3 and 2/3, respectively. As was discussed in Section~\ref{secH}, we have in this case
\begin{eqnarray*}
|\M[G](u+\i v)|\ne 0 \quad \mbox{ for any } \hspace{0.235cm}
u\neq 1+\log_2 \frac{1/3}{2/3}=0 \hspace{0.235cm} \mbox{ and any} \hspace{0.235cm} v \in \R.
\end{eqnarray*}
Therefore, the parameter \(\u\) can be taken from the set 
\begin{eqnarray*}
(-\infty,1) \cap \CC_F \cap \mathcal{H}_G = (-1,0) \cup (0,1).
\end{eqnarray*}
\item \underline{Discrete mixing distribution with infinite support.} Now let \(G\) be the c.d.f.\ of the zeta distribution~\eqref{zetadistr} with parameter \(s=5\). As was discussed in Section~\ref{secH}, in this case \(|\M[G](\u+\i v)|\) is strictly positive for any 
\[\u<-\frac{\log(\zeta(5))}{\log (2)}+1\approx 0.95.\]
\item 
\underline{Continuous mixing distribution.} Finally, assume that \(G\) corresponds to the c.d.f.\ of the uniform distribution on \([0,1]\). In this case, we have \[|\M[G](\u+\i v)|=|\u+\i v|^{-1}>0\]for any \(\u, v\in\R\).
\end{enumerate}

\begin{figure}[t]
\includegraphics[width=1\linewidth]{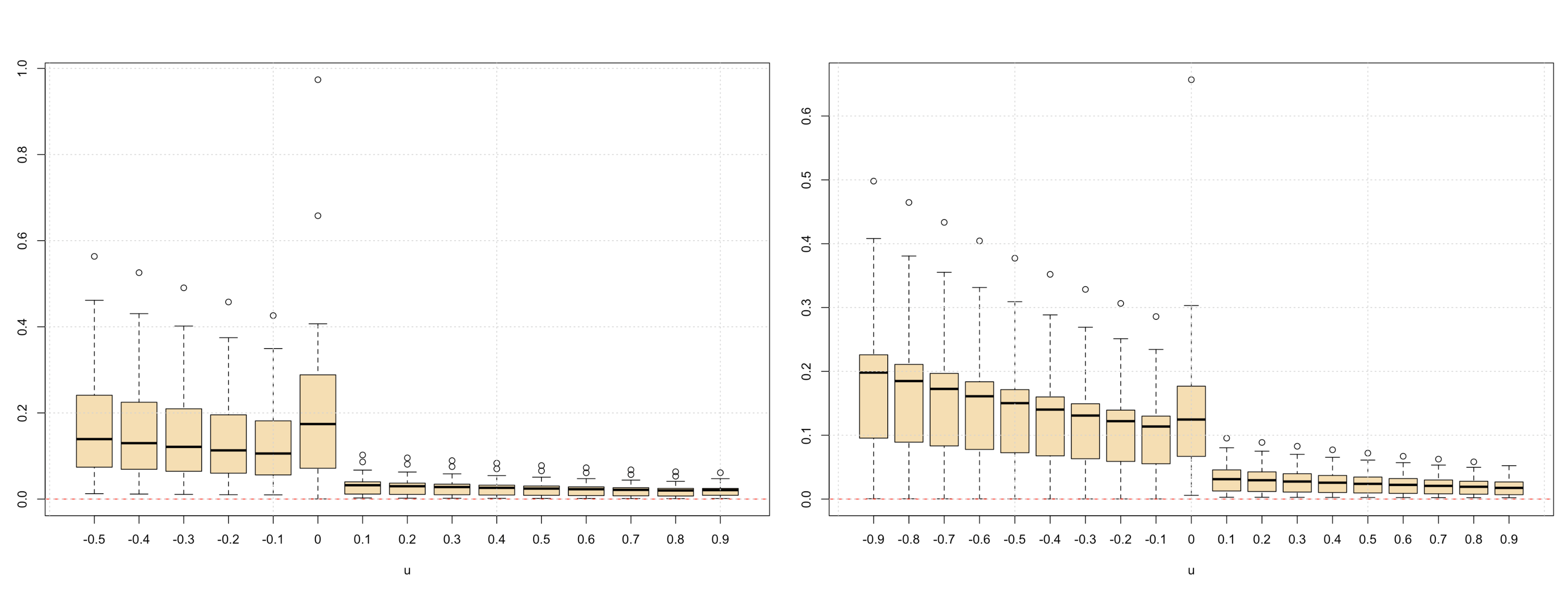}
	\caption{The estimates \(\widehat{\mathcal{R}}(\widehat{F};\u,x)\) for different values of \(\u\) for Beta(2,2) (left) and Gamma(2,2) (right) distributions for \(G\) being a c.d.f.\ of a discrete distribution supported on two points}
	\label{fig:u_twopoint}
\end{figure}
\begin{figure}[t]
\includegraphics[width=1\linewidth]{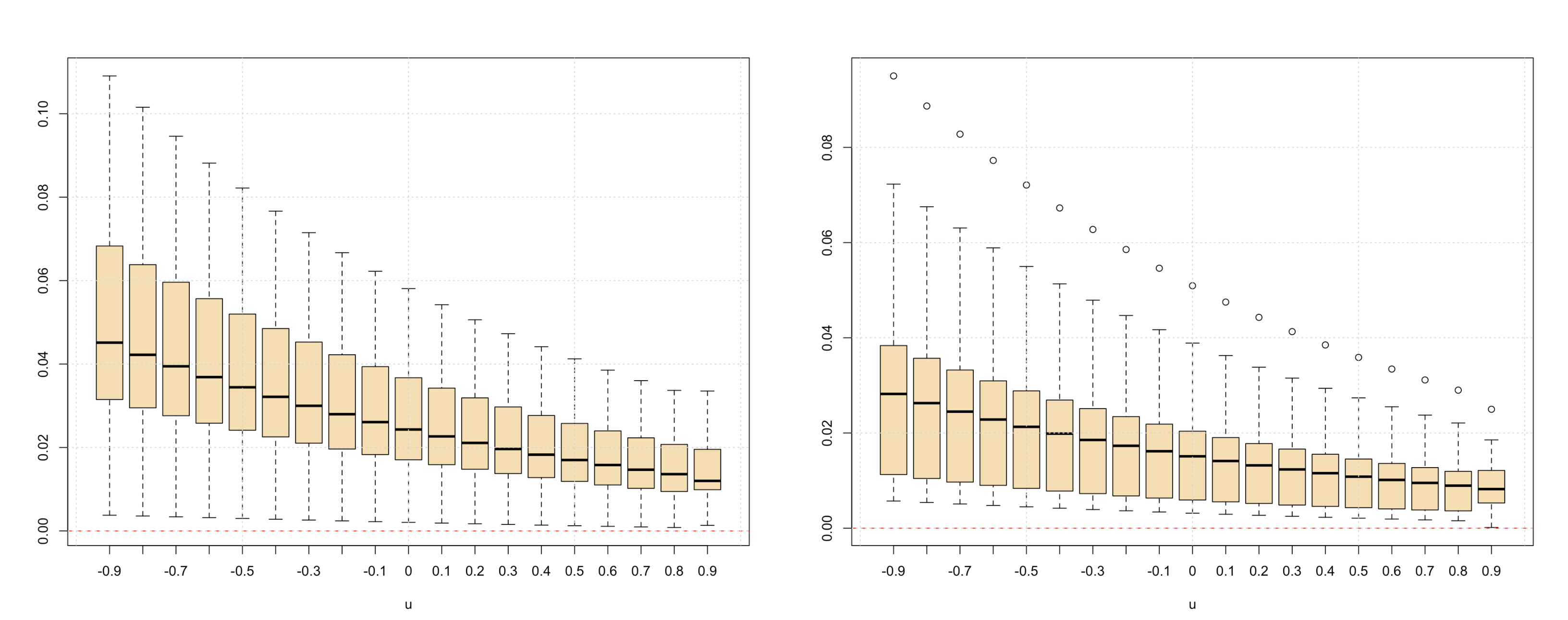}
	\caption{The estimates \(\widehat{\mathcal{R}}(\widehat{F};\u,x)\) for different values of  \(\u\) for Beta(2,2) (left) and Gamma(2,2) (right) distributions for \(G\) being a c.d.f.\ of the zeta distribution}
	\label{fig:u_zeta}
\end{figure}
\begin{figure}[t]
\includegraphics[width=1\linewidth]{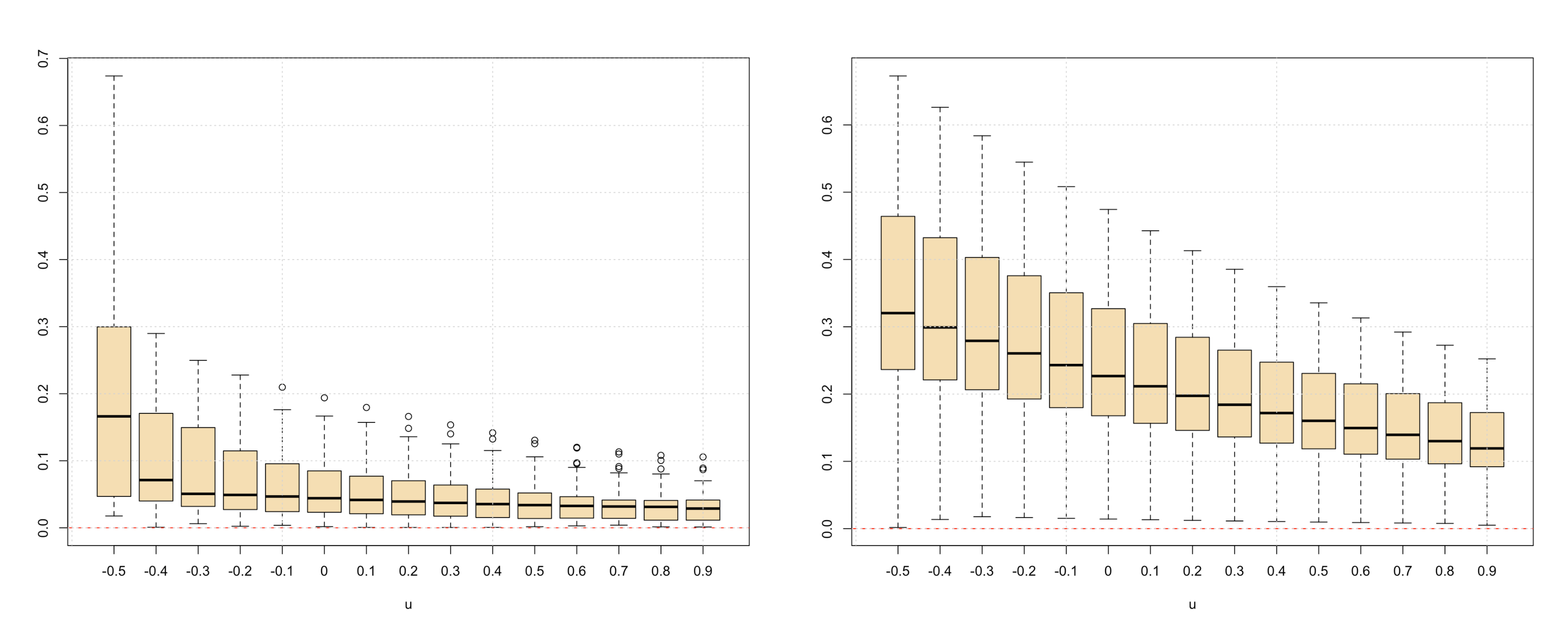}
	\caption{The estimates \(\widehat{\mathcal{R}}(\widehat{F};\u,x)\) for different values of \(\u\) for Beta(2,2) (left) and Gamma(2,2) (right) distributions for \(G\) being a c.d.f.\ of the uniform distribution}
	\label{fig:u_unif}
\end{figure}
As follows from the definition~\eqref{est_main} of the proposed estimator and the form~\eqref{est_main2} of the kernel, the choice of the parameters \(\u\) and \(T\) is essential. Figures~\ref{fig:u_twopoint}, \ref{fig:u_zeta} and~\ref{fig:u_unif} demonstrate the dependence of the left-hand side in Theorem~\ref{thm1} on \(\u\) taken from \(-0.9\) to \(0.9\) with a step of \(0.1\). In case of the two-point distribution \(G\) with \(F\) being the beta law, as well as that of the uniform mixing distribution, the boxplots are displayed for \(u\) starting from \(-0.5\), due to the higher error values obtained for \(\u\in[-0,9,-0.5)\) that disrupt the picture. The value \(x\) is fixed as \(x=0.5\), and the values of \(\widehat{\mathcal{R}}\left(\widehat{F};\u,x\right)\) were computed based on  \(25\) samples of size \(n=1000\). The plots indicate that, as suggested by Theorem~\ref{thm1}, the proposed estimator becomes more accurate as \(\u\) increases with an evident exception for \(\u=0\) in case of the discrete mixing distribution supported on two points. As was mentioned earlier in this section, the reason for this peculiarity is that \(\u=0\) turns out to be exactly the threshold value~\eqref{u_circ}, for which the Mellin transform \(\M[G](z)\) of the mixing distribution is not guaranteed to be separated from zero in absolute value.

In what follows, the parameter \(\u\) is fixed as \(\u=0.5.\)
As for the parameter \(T\), based on the discussion in Section~\ref{convrates} it should be chosen of order \(T\gtrsim \sqrt{n}\). While the choice of \(T\) such that \(T / \sqrt{n} \to \infty\) will not influence the overall asymptotics of the bound, the greater values of \(T\) allow to obtain smaller values of error in numerical studies. To this end, in case of the discrete mixing distributions we choose \(T=n\) as the optimal value allowing to achieve the smallest error possible while not sacrificing too much in terms of the computational cost. In case of the continuous mixing distribution, however, since \(|\M[G](\u+\i v)|\) decreases in \(v\) and can become arbitrarily close to zero as \(|v|\to \infty\), the choice of the truncation level \(T\) for the estimator~\eqref{est_main} should be approached more carefully. For both the beta and the gamma distributions we choose \(T\) as the value minimising the corresponding mean-squared error of the estimator based on 500 samples of size \(n=1000\), resulting into the values \(T=34.6\) and \(T=29.7\) for each of the laws, respectively. For the estimator~\eqref{fourier}, the parameter \(R\) is chosen by this procedure for all the three cases of mixing distributions \(G\), leading to the values \(3.5\), \(9.6\) and \(9.7\) for \(G\) being the two point, zeta and uniform distribution, respectively, with \(F\) being the beta law, and \(9.7\), \(45.4\) and \(3.4\) for the respective mixing distributions \(G\) with \(F\) being the gamma law.

\begin{table}[h]
\centering
\caption{Average error~\eqref{error_def} for the Mellin-based estimator~\eqref{est_main} and the Fourier-based estimator ~\eqref{fourier} obtained via 100 simulation runs}
\label{mses}
\subcaption*{Beta(2,2) distribution}
\begin{tabular}{cc|cc|c}
\multicolumn{2}{c|}{\multirow{2}{*}{Mixing distribution}} & \multicolumn{2}{c|}{Method} & Relative error  \\
&& Mellin & Fourier & Fourier/Mellin \\ \hline
\multirow{3}{*}{Two point} & \(n=100\) & 0.0012 & 0.2128 & 177.33 \\
& \(n=500\) & 0.0003 & 0.2093 & 697.67 \\
& \(n=1000\) & 0.0002 & 0.2092 & 1046 \\ \hline
\multirow{3}{*}{Zeta} & \(n=100\) & 0.000069 & 0.00011 & 1.59 \\
& \(n=500\) & 0.000074  & 0.00011 & 1.49  \\
& \(n=1000\) & 0.000068 & 0.0001 & 1.47 \\ \hline
\multirow{3}{*}{Uniform} & \(n=100\) & 0.0051 & 0.0046 & 0.9 \\
& \(n=500\) & 0.0014 & 0.0012 & 0.86 \\
& \(n=1000\) & 0.00071 & 0.00072 & 1.01 \\ \hline
\end{tabular}
\subcaption*{Gamma(2,2) distribution}
\begin{tabular}{cc|cc|c}
\multicolumn{2}{c|}{\multirow{2}{*}{Mixing distribution}} & \multicolumn{2}{c|}{Method} & Relative error \\
&& Mellin & Fourier & Fourier/Mellin \\ \hline
\multirow{3}{*}{Two point} & \(n=100\) & 0.0012 & 0.0728 & 60.67 \\
& \(n=500\) & 0.0003 & 0.0723 & 241 \\
& \(n=1000\) & 0.0001 & 0.0716 & 716 \\ \hline
\multirow{3}{*}{Zeta} & \(n=100\) & 0.0008 & 0.0022 & 2.75 \\
& \(n=500\) & 0.0002 & 0.0004 & 2 \\
& \(n=1000\) & 0.00008 & 0.0002 & 2.5 \\ \hline
\multirow{3}{*}{Uniform} & \(n=100\) & 0.004 & 0.0035 & 0.88 \\
& \(n=500\) & 0.001 & 0.0007 & 0.7 \\
& \(n=1000\) & 0.0006 & 0.0004 & 0.67 \\ \hline
\end{tabular}
\end{table}

Table~\ref{mses} represents the mean-squared errors
\begin{equation} \label{error_def}
err_n=\frac{1}{KJ}\sum\limits_{k=1} ^K \sum\limits_{j=1} ^J \left(\widehat{F}^{(k)} _n (x_j) - F(x_j)\right)^2,
\end{equation}
where \(\widehat{F}^{(k)} _n\) is either the Mellin-based estimate~\eqref{est_main} or the Fourier-based estimate~\eqref{fourier} obtained on a \(k\)-th sample of size \(n\), for different values of \(n\in\{100,500,1000\}\) and \(K=100\) simulation runs. The values \(x_j\) are taken on a \(J\)-point equidistant grid \(\{x_j\}_{j\in\{1,\dots,J\}}\), the latter being over [0.1,2] with a step of 0.1 in case of the beta distribution and over [0.01,5] with the same step in case of the gamma law.

It can be seen that for both beta and gamma distributions and all the considered mixing laws the proposed estimator~\eqref{est_main} converges to the true c.d.f.\ \(F\) rather fast, with the error~\eqref{error_def} decreasing as the sample size grows and being reasonably small already for \(n=100\). In case of the discrete mixing distribution supported on two points, the estimator~\eqref{est_main} based on the Mellin transform significantly outperforms the Fourier-based estimator~\eqref{fourier}, leading to the error values that are more than 60 times smaller than those of the latter. In case of the zeta distribution, the proposed estimator~\eqref{est_main} also demonstrates a better performance, although this difference is less striking when \(F\) is the c.d.f.\ of the beta law, since the average errors for both estimators are smaller than \(0.00008\) already for \(n=100\), and further convergence to zero appears to be very slow. Finally, in case of the uniform mixing distribution both estimators~\eqref{est_main} and~\eqref{fourier} provide very similar results, with the latter estimator providing a bit better fit in case of the gamma distribution. Also, while for both the beta and the gamma distributions the error decreases as the sample size grows, the rate of convergence to zero appears to be slower than for discrete mixing distributions considered above.

\appendix
\section{Berry -- Esseen inequality in a particular case}\label{exBE}
Let us consider the case when \(\varphi\) and \(\psi\) are the distribution functions of the exponential distribution, that is, 
\begin{eqnarray*}
\varphi(x) =  \bigl( 1- e^{-\lambda_1 x} \bigr) \I\bigl\{ x \geq 0 \bigr\}, \qquad 
\psi(x) =  \bigl( 1- e^{-\lambda_2 x} \bigr) \I\bigl\{ x \geq 0 \bigr\},
\end{eqnarray*} 
where \(\lambda_1, \lambda_2 >0.\) Since
\begin{eqnarray*}
\M[\varphi](z) = \lambda_1^{1-z} \Gamma(z), \qquad 
\M[\psi](z) = \lambda_2^{1-z} \Gamma(z)
\end{eqnarray*}
for any \(z\) with \(\Re(z) >0,\)
we can take \(\u=1/2\).  We have 
 \begin{eqnarray*}
\rho_{1/2}(\F,\G):= \sup\limits_{x \geq 0} \Bigl|\frac{ e^{-\lambda_1 x}-e^{-\lambda_2 x}}{\sqrt{x}} \Bigr|.
\end{eqnarray*}
The maximum is attained at some point  \(x_0\in (0,\infty).\)  For this numerical example we take the values \(\lambda_1 = 1, \lambda_2 =1.5,\) for which \(x_0=0.4\).

We also take \(b=0.8\), and numerically get \(c(0.8) \approx 4.8\) as the solution of~\eqref{cb}. We aim to study the right-hand side of~\eqref{berry}  depending on \(T.\) 

Figure~\ref{plot1} represents the plots of the first and the second summands in~\eqref{berry}, namely,
\begin{eqnarray*}
\widetilde{I}_1 &=& 
\frac{b}{2}\int\limits_{-T} ^T \frac{|\lambda_1^{1/2 - \i v} - \lambda_2^{1/2 - \i v}|}{|v|} \bigl|\Gamma((1/2)+\i v)\bigr|\,dv,\\
\widetilde{I}_2 &=& bT x_0^{\u-1} \int_0^{2c(b)/T}\bigl(e^{x_0e^r}- e^{x_0}\bigr)\,dr.
\end{eqnarray*}
An interesting observation is that \(\widetilde{I}_1\) is almost equal to a constant for large \(T\), while \(\widetilde{I}_2\) decays rather fast as \(T\) grows.
\begin{figure}[t]
\centering
\includegraphics[width=0.8\linewidth]{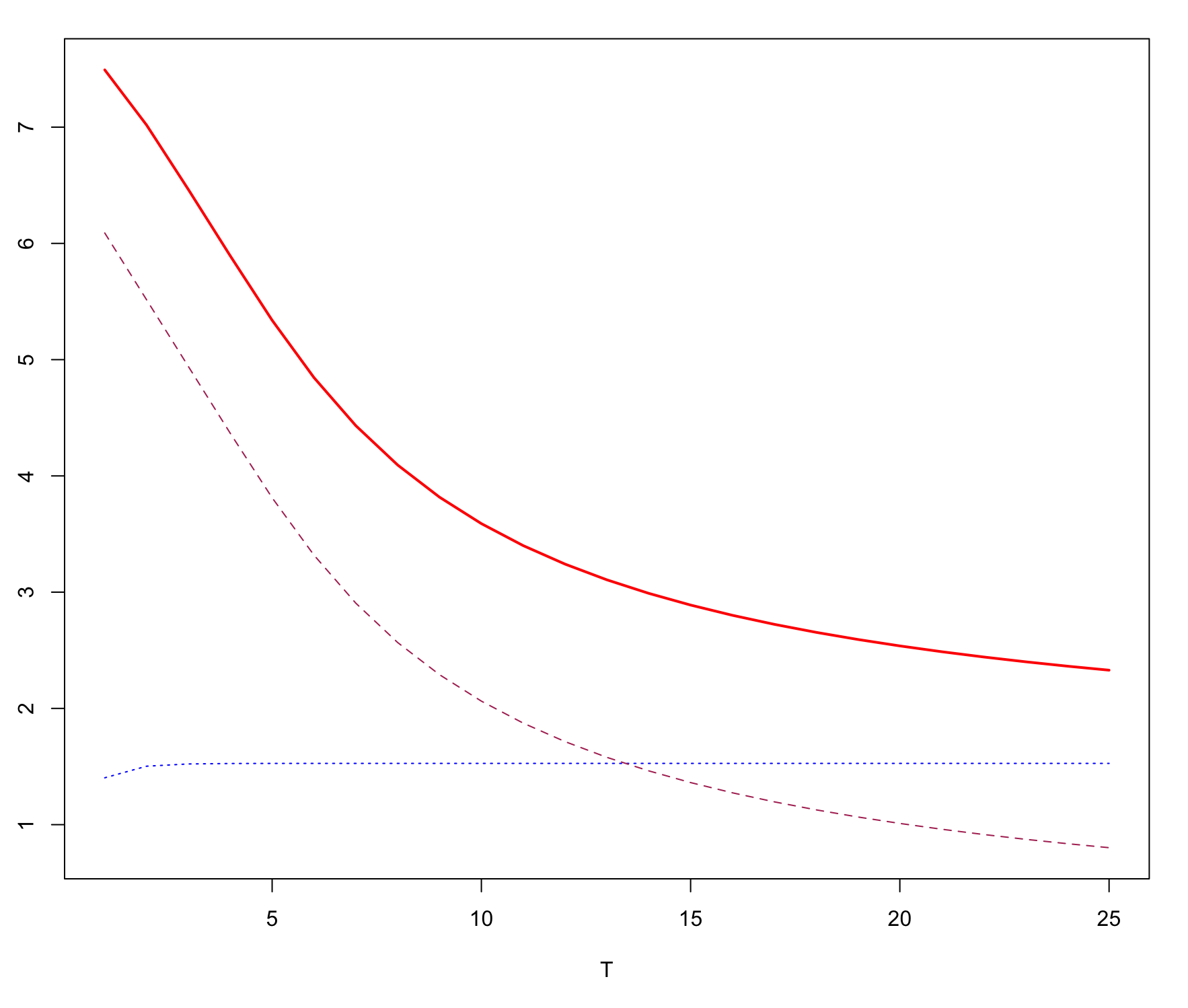}
	\caption{The graphs of the first and the second summands, and the right-hand side in~\eqref{berry}, depending on the value of \(T.\) The blue dotted line  represents the first summand and is for large \(T\) is close to 1.1. The maroon dashed line shows the behaviour of the second summand, which tends to 0 as \(T \to \infty.\) Finally, the red solid line shows the total behaviour of the r.h.s. in \eqref{berry}.
	}
	\label{plot1}
\end{figure}

\section{Proofs}\label{AppB}
\subsection{Proof of Lemma~\ref{lemm}}
\label{A1}
Let us rewrite~\eqref{def_mellin} as
\begin{multline*}
\M[\F](z) =  \int\limits_0 ^{\infty} x^{z-1}\,d\F(x) = \int\limits_0 ^{\infty} e^{(z-1)\log x} \,d\F(e^{\log x})\\
= \int\limits_{-\infty} ^{\infty} e^{-sy} \,d\F(e^y) =: \mathcal{L}[\F(e^y)](s), 
\end{multline*}
where the right-hand side defines the bilateral Laplace transform of \(\F(e^y)\), \(y=\log x\), at the point \(s=-(z-1)\). Applying Theorem 7.7.5 from~\cite{Kawata} for the bilaterial Laplace transform, we arrive at \begin{eqnarray*}
\frac{1}{2}\Bigl( \F(e^y+0) + \F(e^y+0) 
\Bigr) - \F(0)&=& \frac{1}{2\pi \i} \int\limits_{\tilde{u}-\i\infty} ^{\tilde{u}+\i\infty} e^{sy}\frac{\mathcal{L}[\F(e^y)](s)}{s}\,ds \\ &=& \frac{1}{2\pi \i} \int\limits_{\u-\i\infty} ^{\u+\i\infty} x^{-z+1}\frac{\M[\F](z)}{-(z-1)}\,dz,
\end{eqnarray*}
provided \(\tilde{u}:=1-\u > 0\). The second statement of the theorem follows from  another part of Theorem 7.7.5 from~\cite{Kawata}

\subsection{Proof of Lemma~\ref{lem1}}
\label{A}

Let \begin{eqnarray}\label{wW}
w(x) = \frac{\sin^2 (T\log x)}{\pi Tx^{\u}\log^2 x} \qquad  \mbox{and} \qquad W(x) = \int\limits_{0} ^x w(y)\,dy, x \in \R_+. 
\end{eqnarray} Define 
\[\widetilde{\F}(x) := \int\limits_0 ^{\infty} \F(x/y)\,dW(y), \qquad \widetilde{\G}(x) := \int\limits_0 ^{\infty} \G(x/y)dW(y).\]
The main idea of the proof is to find upper and lower bounds for \(\rrho(\widetilde{\F}, \widetilde{\G})\) such that the lower bound linearly depends on \(\rrho(\F, \G)\) with positive slope.

\textbf{1. Upper bound for  \(\rrho(\widetilde{\F}, \widetilde{\G})\).}
 Due to Theorem~7.8.3 from \cite{Kawata}, 
\[\M[\widetilde{\F}](z) = \M[\F](z) \M[W](z),
\qquad \M[\widetilde{\G}](z) = \M[\G](z) \M[W](z),\]
and, by the inverse formula for the Mellin transform (Lemma~\ref{lemm}), 
\begin{multline*}
\widetilde{\F}(x)-\widetilde{\G}(x) 
=  \frac{1}{2\pi} \int\limits_{-\infty} ^{\infty} \frac{\M[\F](\u+\i v)-\M[\G](\u+\i v)}{-(\u+\i v-1)} \\
\hspace{4cm} \times\M[W](\u+\i v)x^{-\u-\i v+1}\,dv.
\end{multline*}
As
\begin{eqnarray}
\nonumber \M[W](\u+\i v) &=& \int\limits_0 ^{\infty} x^{(\u-1)+\i v} \frac{\sin^2 (T\log x)}{\pi Tx^{\u}\log^2 x}\,dx \nonumber\\ &=& \int\limits_{-\infty}^{\infty} e^{\i vy} \frac{\sin^2 (Ty)}{\pi Ty^2}\,dy  = \left(1-\left|\frac{v}{T}\right|\right)\I\{|v|\leq T\},
\label{MW}
\end{eqnarray}
we get
\begin{equation}
\rrho(\widetilde{\F}, \widetilde{\G})
\leq\frac{1}{2\pi}\int\limits_{-T} ^T \frac{|\M[\F](\u+\i v) - \M[\G](\u+\i v)|}{|v|}\,dv.\label{bound1}
\end{equation}
\textbf{2. Lower bound for  \(\rrho(\widetilde{\F}, \widetilde{\G})\).} We have
\begin{equation*}\widetilde{\F}(x)-\widetilde{\G}(x)  = \int\limits_0 ^{\infty} \Bigl(\F\left(x/y\right) - \G\left(x/y\right)\Bigr) \frac{\sin^2 (T\log y)}{\pi Ty^{\u}\log^2 y}\,dy.\end{equation*}
Let us denote \(\Delta=\rho_{\u}(\F,\G)\). Then  \[x_0^{\u-1}\bigl(\F(x_0)-\G(x_0\pm 0)\bigr)=\Delta\] or \[x_0^{\u-1}\bigl(\G(x_0 \pm 0)-\F(x_0-0)\bigr)=\Delta.\] Since for both cases the proof follows the same lines, we will only consider the first one. In this case
\begin{multline*}
x_0^{\u-1} e^{\frac{c(b)(\u-1)}{T}}
\Bigl( 
\widetilde{\F}\left(x_0 e^{\frac{c(b)}{T}}\right) - \widetilde{\G}\left(x_0 e^{\frac{c(b)}{T}}\right)
\Bigr) \\ 
\geq  x_0^{\u-1} \int\limits_{|r|<c(b)} \Bigl(\G\bigl(x_0 \pm 0\bigr) - \G\bigl(x_0 e^{\frac{c(b)-r}{T}}\bigr)\Bigr)\frac{\sin^2 r}{\pi r^2}e^{(c(b)-r)(\u-1)/T}\,dr \\  \hspace{1cm}+  x_0^{\u-1}  \int\limits_{|r|<c(b)} \Bigl(\F\bigl(x_0 e^{\frac{c(b)-r}{T}}\bigr) - \G\bigl(x_0 \pm 0\bigr)\Bigr)\frac{\sin^2 r}{\pi r^2} e^{(c(b)-r)(\u-1)/T}\,dr \\  \hspace{5cm} - \Delta\int\limits_{|r|\geq c(b)} \frac{\sin^2 r}{\pi r^2} dr.
\end{multline*}
Since for any \(r<c(b)\) it holds 
\[
x_0^{\u-1} 
\Bigl(
\F\bigl(x_0 e^{\frac{c(b)-r}{T}}\bigr) - \G
\bigl(x_0 \pm 0\bigr)
\Bigr)
\geq 
x_0^{\u-1}\bigl(
\F\left(x_0 \right) - \G
\left(x_0 \pm 0\right)
\bigr)=\Delta,\]
 we further have
\begin{multline*}
x_0^{\u-1} e^{\frac{c(b)(\u-1)}{T}}
\Bigl(
\widetilde{\F}\bigl(x_0 e^{\frac{c(b)}{T}}\bigr) - \widetilde{\G}\bigl(x_0 e^{\frac{c(b)}{T}}\bigr)
\Bigr) 
\\ 
\hspace{-3cm}
\geq -\frac{T}{\pi}x_0^{\u-1}\int\limits_0 ^{2c(b)/T} |\G(x_0 \pm 0) - \G(x_0 e^r)|\,dr 
\\ 
+ \Delta\int\limits_{|r|<c(b)} \frac{\sin^2 r}{\pi r^2}e^{(c(b)-r)(\u-1)/T}\,dr - \Delta \int\limits_{|r|\geq c(b)} \frac{\sin^2 r}{\pi r^2}\,dr.\hspace{0.6cm}
\end{multline*}
Now, noting that 
\begin{eqnarray*}
\int\limits_{|r|<c(b)} \frac{\sin^2 r}{\pi r^2}e^{(c(b)-r)(\u-1)/T}dr 
&\geq& 
e^{2c(b)(\u-1)/T}
\int\limits_{|r|<c(b)} \frac{\sin^2 r}{\pi r^2} dr,\\
\int\limits_{|r|\geq c(b)} \frac{\sin^2 r}{\pi r^2}\;dr
&=& 1 - 
\int\limits_{|r|<c(b)} \frac{\sin^2 r}{\pi r^2}
dr
\end{eqnarray*}
we obtain \begin{multline*} \rrho (\widetilde{\F}, \widetilde{\G}) \geq -\frac{T}{\pi}
x_0^{\u-1}\int\limits_0 ^{2c(b)/T} |\G(x_0 \pm 0) - \G(x_0 e^r)|\,dr \\+\Delta\Bigl(-1+\int\limits_{|r|< c(b)} \frac{\sin^2 r}{\pi r^2}\,dr \cdot
\bigl(
e^{2c(b)(\u-1)/T} +1
\bigr)
\Bigr),
 \end{multline*}
 where \(e^{2c(b)(\u-1)/T} +1 > 3/2\) due to our choice of \(T\). Choosing \(c(b)\) as in~\eqref{cb}, arrive at the upper bound for \(\Delta\)
\begin{equation}
\label{bound2}
\Delta \leq 
\pi b \cdot
\rrho (\widetilde{\F}, \widetilde{\G}) + T b
x_0^{\u-1}\int\limits_0 ^{2c(b)/T} |\G(x_0 \pm 0) - \G(x_0 e^r)|\,dr.
\end{equation}
Combining this bound with~\eqref{bound1}, we obtain the result.
\subsection{Proof of Theorem~\ref{thm1}}
\label{B}Let us consider 
\begin{equation}
\label{triang}
x^{\u-1}|F(x)-\widehat{F}(x)| \leq x^{\u-1} |F(x)-\widetilde{F}(x)| + x^{\u-1} |\widetilde{F}(x)-\widehat{F}(x)|
\end{equation}
with \(\widetilde{F}(x)\) being an auxiliary function defined as the multiplicative convolution \[\widetilde{F}(x) := F\star W(x) = \int\limits_0 ^{\infty} F\left(x/y\right)\,dW(y),\] where \(W: \R_+ \to \R_+\) is defined by~\eqref{wW}. Since \(\M[\widetilde{F}](z) = \M[F](z)\M[W](z)\) for any \(z\) such that \(\Re(z) \in \CC_{\widetilde{F}}=\CC_F \cap \CC_W\), and \(u^\circ \in \CC_W\) by \eqref{MW}, we get from Lemma~\ref{lemm},
\[\widetilde{F}(x) = \frac{1}{2\pi} \int\limits_{-\infty} ^{\infty} x^{-\u-\i v+1}\frac{\M[F](\u+\i v)\M[W](\u+\i v)}{-(\u+\i v-1)}\,dv, \quad v\in\R.\] Noticing that
\(\M[W](\u+\i v) = K(v),\) see \eqref{est_main2} and \eqref{MW}, we have for the second summand in~\eqref{triang} 
\begin{multline*}
x^{\u-1}|\widetilde{F}(x)-\widehat{F}(x)| 
\\
\hspace{-0.7cm}= x^{\u-1} \left|\frac{1}{2\pi} \int\limits_{-\infty} ^{\infty} x^{(-\u+1)-\i v} \frac{\M[F](\u+\i v)-\widehat{\M[F]}(\u+\i v)}{-(\u+\i v-1)} K(v) \right| \\ 
= x^{\u-1} \left|\frac{1}{2\pi n}\sum\limits_{k=1} ^n \int\limits_{-\infty} ^{\infty}x^{(-\u+1)-\i v} \frac{\M[F_{\text{mix}}](\u+\i v)-X_k ^{(\u-1)+\i v}}{-(\u+\i v-1)\M[G](\u+\i v)}K(v) \right| \\ = \frac{x^{\u-1}}{2\pi n} \left|\sum\limits_{k=1} ^n \Lambda_k (X_k, x)\right|.\hspace{7.2cm}
\end{multline*}
As for the first summand in~\eqref{triang}, let us note that
\begin{multline}
\sup_{x \geq 0} \Bigl| x^{\u-1}\bigl( F(x)-\widetilde{F}(x) \bigr)
\Bigr|
=
\sup_{x \geq0} \Biggl| 	x^{\u-1} F(x) - \int\limits_0 ^{\infty}  \left(
x/y
\right)^{\u-1}F\left(x/y\right)
\widetilde{w}(y) dy
\Biggr| \\
=
\sup_{x \geq 0} \Biggl| 
\int\limits_0 ^{\infty}	 
\left( x^{\u-1} F\bigl(x\bigr) -   \left(
x/y
\right)^{\u-1}F\left(x/y\right)
\right) \widetilde{w}(y) dy\Biggr|,
 \label{e11}
\end{multline}
where \(\widetilde{w}(y):=w(y)y^{\u-1}, y \in \R_+\) is a density function of some distribution.
Since we assume that the distribution function \(F\) is continuous at least in a small vicinity of \(0\), and \eqref{add} holds, the supremum in~\eqref{e11} is attained at some point \(x_0>0.\) Thus, by Lemma~\ref{lem1},
\begin{multline*}
x^{\u-1}|F(x)-\widetilde{F}(x)| \leq
 \frac{b}{2}\int\limits_{-T} ^{T}  \frac{\left| \M[F](\u+\i v)-\M[\widetilde{F}](\u+\i v)\right|}{| v |}\,dv \\+ bT x_0^{\u-1}  \int_0^{2 c(b)/ T} \left|F(x_0 \pm 0)-F\left(x_0 e^r\right)\right|\,dr.
\end{multline*}
Now it only remains to observe that
\begin{multline*}
\int\limits_{-T} ^{T}  \frac{\left| \M[F](\u+\i v)-\M[\widetilde{F}](\u+\i v)\right|}{| v |}\,dv \\=
\int\limits_{-T} ^{T}  \frac{\left| \M[F](\u+\i v)(1-\M[W](\u+\i v))\right|}{| v |}\,dv
 =\frac{1}{T} \int\limits_{-T} ^{T}  \left| \M[F](\u+\i v)\right|\,dv.
\end{multline*}
\subsection{Proof of Theorem~\ref{thm2}}
\label{C}

By the upper bound established in Theorem~\ref{thm1},
\begin{eqnarray}
\nonumber
\RR^*(\widehat{F}; \u, x)  &\leq&    \frac{3b^2}{4T^2}\Bigl(\int\limits_{-T} ^T |\M[F](\u+\i v)|\,dv\Bigr)^2  \\
\nonumber && \hspace{1cm} + 3 b^2 T^2 x_0^{2(\u-1)} \Bigl(\int\limits_0 ^{2c(b)/T} |F(x_0 \pm 0)-F(x_0 e^r)|\,dr\Bigr)^2  \\
&&
 \hspace{1cm}+  \frac{3x^{2(\u-1)}}{4\pi^2 n^2}\E\left[\left|\sum\limits_{k=1} ^n \Lambda_k (X_k, x)\right|^2\right].
 \label{bbr}
\end{eqnarray}
Now, since \(\Lambda_k (X_k, x)\) are centred i.i.d.\ for all \(k=1,\dots,n\),
\begin{eqnarray*}
\E\left[\left|\sum\limits_{k=1} ^n \Lambda_k (X_k, x)\right|^2\right] 
=\Var\left( \sum\limits_{k=1} ^n
 \Lambda_k (X_k, x)
\right)=
n\E\left[\Lambda_1 (X_1, x)\overline{\Lambda_1 (X_1, x)}\right],
\end{eqnarray*}
where for the latter quantity we have that
\begin{multline*}
\E\left[\Lambda_1 (X_1, x)\overline{\Lambda_1 (X_1, x)}\right] \\
= \E\left[\int\limits_{-T} ^T \int\limits_{-T} ^T x^{2(-\u+1)-\i(v-w)}\frac{(\M[F_{\text{mix}}](\u+\i v)-X_1 ^{(\u-1)+\i v})}{(\u-1+\i v)(\u-1-\i w)}\left(1-\frac{|v|}{T}\right)\right. 
\\ \hspace{3.5cm} \times \left. \frac{(\overline{\M[F_{\text{mix}}](\u+\i w)}-X_1 ^{(\u-1)-\i w})}{\M[G](\u+\i v)\overline{\M[G](\u+\i w)}} \left(1-\frac{|w|}{T}\right)\,dv\,dw\right] \\
\leq \int\limits_{-T} ^T \int\limits_{-T} ^T \frac{\M[F_{\text{mix}}](2\u-1+\i(v-w))}{(\u-1+\i v)(\u-1-\i w)\M[G](\u+\i v)\overline{\M[G](\u+\i w)}}\\ 
\hspace{4.5cm} \times x^{2(-\u+1)-\i(v-w)}\left(1-\frac{|v|}{T}\right)\left(1-\frac{|w|}{T}\right)\,dv\,dw.
\end{multline*}
Further, by the Cauchy-Schwarz inequality,
\begin{multline*}
\hspace{-0.3cm}\E\left[\Lambda_1 (X_1, x)\overline{\Lambda_1 (X_1, x)}\right] \leq x^{2(-\u+1)} \int\limits_{-T} ^T \int\limits_{-T} ^T \frac{|\M[F_{\text{mix}}](2\u-1+\i(v-w))|}{((\u-1)^2 + v^2)|\M[G](\u+\i v)|^2} \\ 
 \hspace{9cm} \times\left(1-\frac{|v|}{T}\right)^2\,dv\,dw \\ 
 \hspace{3.95cm} \leq x^{2(-\u+1)} \int\limits_{-T} ^T \frac{1}{((\u-1)^2 + v^2)|\M[G](\u+\i v)|^2} \,dv \\
\times\int\limits_{-\infty} ^{\infty} |\M[F_{\text{mix}}](2\u-1+\i w)|\,dw.
\end{multline*}
Combining   the last inequality with~\eqref{bbr}, we arrive at the desired result.
\bibliographystyle{apalike}
   \bibliography{Panov_bibliography}

\end{document}